\newtheorem{theorem}{Theorem}[section]
\newtheorem{lemma}{Lemma}[section]
\newtheorem{definition}{Definition}[section]
\newtheorem{remark}{Remark}[section]
\tikzset{every picture/.style={line width=0.75pt}} 
\pgfplotsset{
	ylabel right/.style={
		after end axis/.append code={
			\node [rotate=90, anchor=north] at (rel axis cs:1,0.5) {#1};
		}
	}
}
\title{Differential Privacy\\ Overview and Fundamental Techniques}
\author{ 
    Ferdinando Fioretto\\
	University of Virginia\\
    \texttt{fioretto@virginia.edu} \\
	\And
    Pascal Van Hentenryck\\
    Georgia Institute of Technology\\
    \texttt{pvh@gatech.edu} \\
    \And
    Juba Ziani\\
    Georgia Institute of Technology\\
    \texttt{jziani3@gatech.edu} \\
}
\date{}
\begin{document}
\maketitle

\begin{abstract}
This chapter is meant to be part of the book ``Differential Privacy in Artificial Intelligence: From Theory to Practice'' and provides an introduction to Differential Privacy. It starts by illustrating various attempts to protect data privacy, emphasizing where and why they failed, and providing the key desiderata of a robust privacy definition. It then defines the key actors, tasks, and scopes that make up the domain of privacy-preserving data analysis. 
Following that, it formalizes the definition of Differential Privacy and its inherent properties, including composition, post-processing immunity, and group privacy. The chapter also reviews the basic techniques and mechanisms commonly used to implement Differential Privacy in its pure and approximate forms.
\end{abstract}


\section{Introduction}
\label{ch1:sec1}

Data is continuously harvested from nearly every facet of our lives by corporations, service providers, and public institutions. Whether through smartphones, social media interactions, internet use, healthcare visits, or financial transactions, information is continuously gathered, shaping the foundation of the modern economy. Private companies leverage this vast pool of data to evaluate loan candidates, optimize transportation networks, improve supply chains, personalize services, and predict market demands, all to enhance decision making. Similarly, public policies and government initiatives rely heavily on this data, guiding resource distribution, monitoring public health crises, and driving urban development and sustainability efforts.

However, these datasets also contain a large array of sensitive information, including health, financial, or location data. Major privacy violations and breaches are commonplace, and can have severe negative impacts, not only on  consumers and online users, but also on entire organizations and governments. 
For instance, the 2017 Equifax data breach~\citep{equifax_wiki} exposed the personal information of 147 million individuals, including social security numbers, birth dates, and addresses, leaving millions vulnerable to identity theft, fraud, and long-term financial harm. Similarly, the 2016 Facebook-Cambridge Analytica scandal~\citep{facebook_analytica_wiki}, in which the personal data of up to 87 million Facebook users was harvested without consent for political advertising purposes, raised concerns about its possible influence on the outcome of the 2016 presidential election.

Privacy concerns have become central in today’s society, driving significant changes in government policy. Various regulatory frameworks have been established, with the United States and Europe leading efforts toward stronger privacy practices.  In Europe, the General Data Protection Regulation (GDPR) sets strict standards for data management, focusing on consent and data minimization \citep{GDPR}, while in the U.S., regulations like Title 13 \citep{CongressTitle13} govern the handling of census data and laws like the Health Insurance Portability and Accountability Act (HIPAA) and the California Consumer Privacy Act (CCPA) offer protections for health and consumer data \citep{hipaa, CCPA}. 
This movement was further emphasized in October 2023 when the Biden administration issued an Executive Order on AI, ensuring the enforcement of consumer protection laws and introducing safeguards against privacy violations in AI systems. Government actions, such as the release of the AI Bill of Rights Blueprint~\citep{billofrights} in the US, underscore the increasing focus on privacy in both policy and technology.

Public policy has also devoted extensive research into technical solutions for privacy. Over the past three decades, this research has explored a wide range of privacy definitions and techniques, but one has emerged as a pivotal framework: \emph{Differential Privacy} (DP)~\citep{DKMMN:06}. DP has gained widespread recognition and adoption, not only by leading technology companies like Apple, Meta, Google, and LinkedIn, but also by the U.S.~government, most notably in its landmark 2020 Census data release.

Differential Privacy is now widely regarded as the gold standard for privacy protection in statistical analyses and dataset releases. Its strength lies in providing a \emph{formal} and \emph{mathematical} definition of privacy, offering \emph{precise} and \emph{provable} guarantees. This is in stark contrast to historically ad-hoc and loosely defined privacy methods, which have repeatedly failed under attacks aimed at reconstructing part of the original dataset or identifying individuals in said datasets. As privacy challenges evolve, so too does Differential Privacy, expanding across diverse fields to meet new demands. 
This book aims at providing a comprehensive introduction to DP, particularly within the novel challenges brought by AI applications. It explores its foundational theories, applications in machine learning, and practical implementations, equipping readers with the knowledge to leverage this critical technology effectively.

\paragraph{Overview of the chapter.} 
This chapter is structured to provide an introduction to Differential Privacy. It begins by illustrating various attempts to protect data privacy, emphasizing where and why they failed, and providing the key desiderata of a robust privacy definition (Section~\ref{ch1:sec2}).
It then defines the key actors, tasks, and scopes that make up the domain of privacy-preserving data analysis (Section~\ref{ch1:sec3}).
Following that, Section~\ref{ch1:sec4}, formalizes the definition of DP and its inherent properties, including composition, post-processing immunity, and group privacy. The chapter also reviews the basic techniques and mechanisms commonly used to implement Differential Privacy in Sections \ref{ch1:sec4} to \ref{ch1:sec6}. Finally, Section \ref{ch1:sec7} concludes with an overview of Differential Privacy applications and some future directions in this field.

\section{A Historical Perspective on Privacy}
\label{ch1:sec2}

This section begins by posing a fundamental question: \emph{What are the key desiderata and properties that a robust privacy definition must guarantee?} To address this, it examines historical failures of previous and current privacy definitions, highlighting the necessity for well-defined and formal guarantees. This section first outlines the main properties satisfied by Differential Privacy---these properties will be formally detailed later in this chapter. It then delves into specific examples of major privacy breaches over the past 30 years, identifying for each how adherence to certain privacy desiderata could have prevented the failure.

A central argument of this book is the importance of \emph{well-defined and formal privacy guarantees}. A major weakness in many privacy techniques arises when the protections themselves are poorly specified, particularly when they fail to clearly define the classes of attacks they are designed to resist. Over the past three decades, numerous privacy attacks have exploited such ambiguities, often by applying privacy notions beyond their intended use cases. To address these challenges, this chapter focuses on four main \emph{desiderata} that a strong privacy definition should satisfy:

\begin{enumerate}[leftmargin=*,topsep=2pt,itemsep=2pt]
\item \textbf{Desiderata 1: Compositionality.} A good privacy definition should ensure that its protections gracefully degrade  when applied multiple times, whether across several datasets or through repeated private data analyses. In a data-driven world, where datasets are frequently analyzed multiple times and may contain overlapping information about individuals, composition is crucial. Without it, repeated analyses can cumulatively erode privacy safeguards and ultimately compromise individual privacy.

\item \textbf{Desiderata 2: Post-processing immunity.} 
Once data has been privatized using a privacy-preserving mechanism, any further data analyses should not degrade its privacy guarantees, provided that the original, non-privatized data remains inaccessible. This property assures that subsequent steps or transformations applied to the privatized output cannot compromise privacy. Post-processing immunity offers a strong guarantee that allows data analysts to abstract away potential attack models, effectively providing \emph{future-proof protection} against privacy violations.

\item \textbf{Desiderata 3: Group privacy.} 
Group privacy aims at controlling how privacy guarantees degrade when considering groups of individuals rather than single individuals. It ensures that a privacy mechanism does not arbitrarily fail to protect privacy beyond the individual level when data from multiple users is combined. While it is inevitable that privacy guarantees weaken as group sizes increase, since more information is encoded about them, the degradation should be controlled and quantifiable.

\item \textbf{Desiderata 4: Quantifiable privacy-accuracy trade-offs.} 
There is no free lunch in privacy: releasing accurate information about a group of people must necessarily and statistically encode some information about individuals. As privacy protection increases the accuracy of insights derived from the data may decrease. A good privacy definition should provide quantifiable trade-offs, allowing data analysts, decision-makers, and model builders to measure how much accuracy is sacrificed for a given level of privacy. This enables them to balance privacy and utility according to specific needs. 
\end{enumerate}

\noindent
The following sections provide historical examples illustrating why privacy is complex, where traditional methods have failed, and how the above desiderata are essential for guaranteeing robust privacy.

\subsection{Data Anonymization} 

A standard technique for privacy protection in various domains is \emph{anonymization}. It involves the removal or masking of any identifying details to prevent the recovery of personal identities. Anonymization has been employed in areas such as the release of medical datasets under the Health Insurance Portability and Accountability Act (HIPAA) standards. In the mid-1990s, the Massachusetts Group Insurance Commission (GIC), a government agency responsible for purchasing health insurance for state employees, sought to promote medical research by releasing anonymized health data. The GIC approach involved removing what they considered ``explicit'' identifiers such as names, addresses, and social security numbers, while retaining hundreds of other attributes deemed non-identifiable. Supported by then-Governor William Weld, this initiative aimed at balancing data utility with privacy protection. However, in 1997, Dr.~Latanya Sweeney, then a graduate student at MIT, set out to challenge the effectiveness of this anonymization. Using publicly available information, she re-identified Governor Weld’s medical records within the dataset and sent them to his office, starkly demonstrating the vulnerability of supposedly anonymized data.

How was Dr. Sweeney able to uncover Governor Weld’s personal medical information from the GIC’s released data? One might assume that such an attack required sophisticated techniques and significant resources. In reality, her de-anonymization attack cost only \$20 and limited time. The GIC’s dataset included three crucial attributes for each individual: sex, zip code, and date of birth. Dr.~Sweeney purchased voter registration records from Cambridge, Massachusetts, which contained names, addresses, zip codes, and dates of birth. By cross-referencing these two datasets, she found that only six people in Cambridge shared Governor Weld’s birth date. Of those, only three were male, and just one resided in his zip code—uniquely identifying his medical records. This type of attack, known as a \emph{linkage attack}, re-identifies individuals by linking anonymized data with external public records. In a subsequent report~\citep{sweeney2000simple}, Dr. Sweeney demonstrated that her attack extended far beyond a single high-profile individual. She found that ``87\% of the population in the United States had reported characteristics that likely made them unique based only on {5-digit ZIP, gender, date of birth}.'' Even at broader geographic levels, significant portions of the population could be uniquely identified with minimal information.
``About half of the U.S.~population 
are likely to be uniquely identified by only {place, gender, date of birth}, where place indicates the city, town, or municipality in which the individual resides.'' 

\paragraph{Why did anonymization fail?} Anonymization failed because it lacked formal privacy guarantees. Dr.~Sweeney’s attack was remarkably simple, yet unanticipated due to the absence of a precise attack model. The lack of \emph{post-processing immunity} meant that, once the anonymized data was released, combining it with other publicly available datasets could reveal more information than intended. If the privacy mechanism had been robust to post-processing, additional analyses or data combinations would not have compromised individual privacy beyond what was already publicly accessible. This example motivates the need for formal privacy definitions that account for all potential avenues of data exploitation.

\subsection{K-Anonymity}

At this point, one might argue that the previous example does not represent a fundamental failure of anonymization as a privacy technique, but rather a misapplication in that specific instance. Is it possible to thwart de-anonymization attacks by simply withholding more attributes? For instance, would not releasing someone's zip code, date of birth, or gender resolve the issue? However, a significant challenge emerges in determining which combinations of publicly available attributes could uniquely identify an individual. As the number of features in a dataset grows, it becomes practically impossible for modern computing to predict and guard against all potential attack vectors. Moreover, sensitive attributes often correlate statistically with non-sensitive ones, rendering anonymization susceptible to statistical attacks that can probabilistically reconstruct sensitive information through these correlations—for a particularly sensitive example involving genomic data, refer to~\cite{homer2008resolving}.

Despite decades of deployment, anonymization has consistently failed to provide robust privacy protection. Other high-profile failures include the AOL search data release~\citep{barbaro2006face}, the Netflix Prize dataset~\citep{Narayanan2006HowTB}, and studies demonstrating that individuals can be uniquely identified using just a few mobile phone location points~\citep{de2013unique}. So, what is the next step? Can  the concept of anonymization be refined to address its shortcomings? A promising strategy might be to release only partial information about each attribute. For example, in demographic or medical analyses, knowing that an individual falls within a certain age \emph{range}, such as ``between 18 and 35,'' might suffice. By revealing less precise information, can re-identification attacks be made more difficult?

\begin{table}[!t]
\centering
\begin{minipage}{0.29\textwidth}
\centering
\resizebox{\linewidth}{!}{
\begin{tabular}{c|c|c}
\multicolumn{3}{c}{Original Data} \\ 
\toprule
Name & Zip Code & Age \\ 
\midrule
Rick & 19456 & 67 \\  
Nathan & 30309 & 33 \\  
Yani & 19445 & 64 \\  
Xiao & 30457 & 35 \\  
Luciana & 19456 & 67 \\  
Anastasia & 30271 & 38 \\  
Marcia & 19456 & 31 \\  
Yuki & 19456 & 62 \\  
\bottomrule
\end{tabular}}
\end{minipage}%
\hspace{15pt} 
\begin{minipage}{0.23\textwidth}
\centering
\resizebox{\linewidth}{!}{
\begin{tabular}{c|c|c}
\multicolumn{3}{c}{Anonymized Data} \\ 
\toprule
ID & Zip Code & Age \\ 
\midrule
1 & 19456 & 67 \\  
2 & 30309 & 33 \\  
3 & 19445 & 64 \\  
4 & 30457 & 35 \\  
5 & 19456 & 67 \\  
6 & 30271 & 38 \\  
7 & 19456 & 31 \\  
8 & 19456 & 62 \\  
\bottomrule
\end{tabular}}
\end{minipage}%
\hspace{15pt} 
\begin{minipage}{0.24\textwidth}
\centering
\resizebox{\linewidth}{!}{
\begin{tabular}{c|c|c}
\multicolumn{3}{c}{$4$-anonymized Data} \\ 
\toprule
ID & Zip Code & Age \\ 
\midrule
1 & 19*** & 60-70 \\ 
2 & 30*** & 30-40 \\ 
3 & 19*** & 60-70 \\ 
4 & 30*** & 30-40 \\ 
5 & 19*** & 60-70 \\ 
6 & 30*** & 30-40 \\ 
7 & 30*** & 30-40 \\ 
8 & 19*** & 60-70 \\ 
\bottomrule
\end{tabular}}
\end{minipage}
\caption{Three levels of anonymization on a demographic dataset. 
{\bf Left:} original dataset. 
{\bf Center:} masking the sensitive names for1-anonymity.
{\bf Right:} generalizing Zip codes and age attributes for 4-anonymity.}\label{fig:kanonym} 
\end{table}

In 1998, Prof.~Pierangela Samarati and Dr.~Latanya Sweeney (the same Dr.Sweeney who highlighted the failures of basic anonymization) introduced a generalization called \emph{$k$-anonymity}~\citep{1998kanon,SweeneyKAnon}. A dataset satisfies $k$-anonymity if, for every record, there are at least $k - 1$ other records with identical values in a set of quasi-identifiers—attributes that could potentially be linked to external data to re-identify individuals. In this framework, it should be impossible to distinguish between any of the $k$ individuals sharing the same quasi-identifiers. In particular, the larger the value of $k$, the stronger the privacy guarantee. Consider, for example, the dataset in Table~\ref{fig:kanonym} (left), containing information about state employees. One approach is to release this dataset by replacing sensitive names with random identifiers (see Table~\ref{fig:kanonym} (middle)). This technique provides only $1$-anonymity, which is essentially standard anonymization. However, each individual still has a unique combination of zip code and age, making them vulnerable to singling-out attacks~\citep{sweeney2000simple,SweeneyKAnon}.
In contrast, the table on the right demonstrates $4$-anonymity: entries \#1, 3, 5, and 8 are indistinguishable from each other, as are entries \#2, 4, 6, and 7. Individuals are grouped into clusters of four, where each group shares the same generalized (zip code, age) attributes, significantly enhancing privacy.

\begin{remark}[Privacy~vs.~Utility]
In $k$-anonymization, increasing the value of $k$ enhances \emph{privacy} by making it more difficult to distinguish between individuals, as they are grouped into larger clusters with identical quasi-identifiers. However, this comes at the expense of \emph{utility}. As $k$ increases, the information becomes less precise, reducing the dataset’s usefulness for analysis. For instance, in the $4$-anonymized version of our dataset, the details about individuals’ zip codes and ages are less specific compared to the $1$-anonymized version. This trade-off between privacy and utility is a central theme in privacy research and will be addressed in the context of Differential Privacy in subsequent chapters.
\end{remark}

\paragraph{Where does $k$-anonymization fail? Reason \#1: Lack of group privacy.}
At first glance, $k$-anonymity appears to address the shortcomings of basic anonymization by preventing the singling out of any specific individual within a dataset. In fact, for years, it was considered the state-of-the-art solution for preventing re-identification attacks. However, $k$-anonymity suffers from a significant limitation concerning the leakage of sensitive information, even when individuals are not directly identified.
The core issue is not merely the potential to link a data subject to a specific record. Instead, the real problem lies in the exposure of sensitive information associated with individuals without explicit re-identification, as highlighted in~\citep{desfontainesblog20170814}. In essence, one does not need to pinpoint a specific person to infer personal, sensitive details about them. Consider the previous example, but now suppose the dataset includes a sensitive attribute, such as credit scores (see Table \ref{tab:credit}, Left). Even without directly identifying anyone, an adversary could learn sensitive information about individuals based on the available data.
Using the same linkage approach that Dr.~Sweeney employed in her de-anonymization of the GIC medical records, one could cross-reference publicly available data to deduce that entries \#2, 4, 6, and 7 correspond to Nathan, Xiao, Anastasia, and Marcia. Although it is impossible to match each person to their exact record, one can still infer that all four individuals have a credit score in the  ``Fair'' category. This represents a significant privacy breach, as sensitive information is disclosed without explicit identification. Here, the property of \emph{group privacy} is violated—the privacy guarantee collapses when aggregating data from as few as four individuals—leading to both group-level and \emph{individual-level} harms.

\begin{table}[!t]
\centering
\begin{minipage}{0.4\textwidth}
\centering
\resizebox{\linewidth}{!}{
\begin{tabular}{c|c|c|c}
\toprule
ID & Zip Code & Age & Credit Score\\ 
\midrule
1 & 19*** & 60-70 & 797\\
2 & 30*** & 30-40 & 650\\ 
3 & 19*** & 60-70 & 755\\ 
4 & 30*** & 30-40 & 590\\
5 & 19*** & 60-70 & 767\\ 
6 & 30*** & 30-40 & 597\\ 
7 & 30*** & 30-40 & 613\\ 
8 & 19*** & 60-70 & 775\\
\bottomrule 
\end{tabular}}
\end{minipage}
\hspace{10pt}
\begin{minipage}{0.4\textwidth}
\centering
\resizebox{\linewidth}{!}{
\begin{tabular}{c | c | c| c} 
 \toprule
 ID & Zip Code & Age & Credit Score\\ [0.5ex] 
 \midrule
 A & 30*** & 30-40 & 815\\ 
 B & 30*** & 30-40 & 613\\ 
 C & 30*** & 30-40 & 376\\ 
 D & 30*** & 30-40 & 727\\
 \bottomrule
\end{tabular}}
\end{minipage}
\caption{Two $k$-anonymized datasets augmented with credit score information.
{\bf Left:} State Employee Dataset.
{\bf Right:} The Dataset of Company Z.}\label{tab:credit}
\end{table}

\paragraph{Where does $k$-anonymization fail? Reason \#2: Lack of composition.}
A more subtle issue with $k$-anonymity arises from the concept of \emph{composition}, described earlier in this chapter. Unfortunately, $k$-anonymity lacks fundamental composition guarantees and fails when multiple datasets are released. In fact, even releasing just two $k$-anonymized datasets can be sufficient to break its privacy protections in the worst-case scenario. To illustrate this, imagine a situation where it is known that Marcia is a state employee included in a dataset that has been  $4$-anonymized. Additionally, suppose that Marcia is a client of Company $Z$ , which aims at helping individuals improve their credit scores. Company $Z$ sells a separate $4$-anonymized dataset about its customers (see Table~\ref{tab:credit}, Right). Knowing that Marcia is present in both datasets, an adversary can cross-reference the state records with Company $Z$'s records to find a \emph{unique} match: the only individual appearing in both datasets is someone in the 30-40 age range, residing in zip code \texttt{30***}, and having a credit score of 613. This individual must be Marcia, thereby uniquely identifying her. This scenario demonstrates a failure of \emph{composition}: the privacy guarantees of  $k$-anonymity break down when datasets are combined. While this example is simplified for clarity, extensive practical evidence has shown that the issues with  $k$-anonymity are real and pervasive~\citep{narayanan2008robust}. These limitations underscore the need for more robust privacy definitions that can withstand linkage attacks, data aggregation, and the release of multiple datasets.

\subsection{Any Perfectly Accurate and Deterministic Privacy Notion Must Fail} 

Various strategies have been proposed to address the shortcomings $k$-anonymity without significantly reducing the utility of data for demographic and population-level analyses. One such method is \emph{data swapping}, which involves exchanging parts of dataset entries among individuals to ensure that no single row corresponds directly to one person, while still preserving overall demographic counts like the ``number of people in dataset X that have property Y.'' This technique was employed in the release of U.S.~Census data products prior 2020. Another approach is \emph{data minimization}, which focuses on collecting as little data as necessary and discarding it after it has served its purpose.
Despite these efforts, the challenge of ensuring that privacy guarantees degrade gracefully and predictably under repeated queries remains unresolved. To address this, it is important to highlight a \emph{fundamental} property that must be satisfied by any robust privacy definitions, helping us narrow down the search for effective solutions. Specifically, the claim is that \emph{no perfectly accurate and deterministic privacy technique can satisfy our requirements}, and that \emph{randomization is essential for privacy}.

This crucial point can be illustrated with a simple example where the lack of randomness leads to a failure in \emph{composition}. Imagine a hypothetical company named Gluble, which has 25 employees. Gluble publicly announces that the average salary of its employees is \$500,000, perhaps to attract top talent with its competitive compensation. After hiring a 26th employee named Rick, the company updates its public average salary to \$505,000. From these two pieces of information, one can deduce Rick’s salary. Using basic arithmetic, Rick’s salary  $x$ is obtained by solving $\frac{(x + 25 \times 500,000)}{26} = 505,000$, i.e., $x = \$630,000$. 
This amount is significantly higher than his colleagues' salaries. This scenario shows a \emph{failure of composition}: while each individual data release seems innocuous, combining them allows an adversary to infer sensitive information about an individual. Even with access to just two queries, a differential attack reconstructed private data. Although this example is simplified, 
~\cite{dinur2003revealing} have shown that such differential attacks can be executed in far more complex settings, even when the query language is restricted. Importantly, the attack used no information about how the data was privatized. This vulnerability arises because the average salary at Gluble was released deterministically and exactly. What would happen if noise was added to Gluble's salary reports? Suppose that the average salary before hiring Rick was reported as approximately \$500,000, and after hiring, it was approximately \$505,000. It is no longer clear  question whether the change is due to Rick's salary or simply a result of the added randomness. After all, the introduction of noise creates uncertainty, preventing exact inference of individual salaries. This concept of adding randomness to data releases is a cornerstone of Differential Privacy.

Observe that providing Differential Privacy is more complex than ``just'' adding noise. At a high level, the more noise is added, the better our privacy guarantees are going to be; however, adding too much noise is undesirable, as it destroys the utility of privately-released datasets and statistics.  Therefore, noise must be carefully calibrated to balance privacy protection with data utility, enabling us to provide formal and provable privacy guarantees alongside precise \emph{privacy-utility trade-offs}. In fact, three years before Differential Privacy was formally introduced by 
~\cite{dwork2006calibrating}, 
~\cite{dinur2003revealing} laid the groundwork for understanding these trade-offs when incorporating privacy noise. Readers can consult their work, as well as subsequent studies~\citep{DPorg-reconstruction-theory,DPorg-diffix-attack}, for a deeper exploration of the challenges in calibrating noise to protect against reconstruction attacks. The following sections delve deeper into Differential Privacy, what it protects against, its formal definition, guarantees, and the basic mechanisms to achieve it, providing a comprehensive understanding of the crucial role played by randomization in safeguarding privacy.

\subsection{A Side Note: Other Types of Privacy Breaches}

The discussion above highlights the importance of our privacy desiderata and illustrates how previous techniques that failed to meet these criteria have led to significant privacy failures. So far, the presentation relied on simple examples involving variants of anonymization techniques and the challenges associated with privatizing and releasing datasets. However, with the advent of increasingly complex models and large-scale machine learning applications, privacy failures have begun to emerge in more intricate and subtle ways—even when privatized datasets are never directly released. In particular, recent research has demonstrated that privacy can be compromised not only through released statistics but also via the models themselves. A notable example is \emph{Federated Learning (FL)} \citep{li2020federated}. 
The goal of FL frameworks is to protect privacy through decentralization: each user retains their data on their local device, performs computations locally, and only transmits aggregated updates (such as gradient information) to a central server. The intent is that no central entity ever accesses individual user data, thereby preserving privacy. Yet, recent work has shown that this is insufficient: the gradient updates themselves often encode sufficient information to be able to guess the original user data with high accuracy~\citep{zhu2019deep}. 

This issue is not confined to the training of machine learning models. Even after a model is trained and the original data is ostensibly deleted, the released models can still encapsulate information about the training data. This can lead to privacy breaches where models inadvertently memorize and reproduce parts of their training datasets. 
For instance, large language models trained on extensive text corpora have been found to occasionally output verbatim snippets from the training data when prompted in specific ways~\citep{carlini2021extracting}.
A real-world example involves a South Korean AI company Scatter Lab~\citep{theregisterKoreanAppmaker}. Scatter Lab used text and messaging data from users on South Korea's biggest test messaging company, KakaoTalk, to train a chatbot service. Despite efforts to remove personally identifiable information, the chatbot reproduced memorized conversations from the training data when users interacted with it, inadvertently disclosing private and sensitive information about KakaoTalk users. These examples illustrate that privacy breaches can occur even without direct access to the underlying datasets. Thus there is a need for privacy-preserving techniques that extend beyond data anonymization and address the inherent risks in modern machine learning practices. Differential Privacy offers a framework to mitigate these risks by providing formal guarantees that limit the potential for information leakage, even when models are trained on sensitive data and released publicly. Part II of this book explores how Differential Privacy can be applied to machine learning and optimization tasks to safeguard individual privacy for increasingly complex data analysis.


\section{What Protections Does Differential Privacy Provide?}
\label{ch1:sec3}

\subsection{What Does Differential Privacy Promise?}

This section examines and defines what Differential Privacy does and does not protect against. It considers the scenario of an analyst or data curator who aims at collecting and aggregate personal and sensitive data for release in a privacy-preserving manner. This release can take various forms, such as a synthetic version of the dataset that masks private information, a set of sensitive population-level statistics about individuals in the dataset, or a model trained on the sensitive data. The common objective in all these cases is to release data that carefully conceal sensitive attributes at the \emph{individual} and \emph{group} level while retaining sufficient information to provide useful statistics or models at the \emph{population} level. For example, an analyst might wish to determine the fraction of a population with a particular disease or calculate the average salary of employees in a company. In these instances, the data pertaining to each individual is private and sensitive, and individuals may prefer to keep it confidential.

\paragraph{First attempt: No information leakage.} Ideally, no information about any specific individual should be leaked through the data release, i.e., ``nobody can learn \emph{any} information about a specific individual from the privatized computation.'' Achieving this level of privacy is theoretically straightforward: simply do not collect or use any data at all. However, this is impractical, as it precludes any meaningful data analysis. {\em Herein lies a fundamental tension highlighted earlier in this chapter: using more data enhances the accuracy and usefulness of the models and statistics but potentially compromises individual privacy.}

\paragraph{Second attempt: \emph{Almost} no information leakage.} Rather than requiring that one learns \emph{nothing} about any individual when conducting useful statistical analyses, perhaps one can accept learning \emph{as little as possible} or \emph{almost nothing} about them. Recall the example from the introduction concerning Rick’s salary and the addition of noise for privacy. If the company Rick works for is large enough, adding a small amount of noise to the average salary can allow for releasing an approximate estimate of the average salary, while making it difficult to deduce Rick's specific salary. However, the problem here is subtle. It may still be possible to learn significant information about Rick, for instance, that he likely has a high salary because he works at a company where the average salary is close to \$500,000. This may seem innocuous if Rick is expected to hold a high-paying position. But consider a more sensitive scenario: imagine that, in the early 1950s, Rick is a smoker participating in a novel medical study investigating the link between smoking and cancer. The study concludes that smoking does cause lung cancer. As a result, anyone who knows that Rick smokes now knows he is at a higher risk of developing lung cancer. His insurance company might increase his premiums or refuse coverage for cancer treatment, citing a pre-existing condition due to his smoking. Clearly, Rick has been \emph{harmed} by the outcome of the study.

\paragraph{Refining the definition of privacy.}
The perspective adopted in this book and by Differential Privacy is that the above scenario does not constitute a privacy violation. Consider a counterfactual world where Rick did not participate in the study. The medical study would still have concluded that smoking causes cancer, and Rick would have faced the same potential harms. Rick's decision to share his data had \emph{(almost) no impact} on the released statistical inference that smoking causes cancer. This outcome is unavoidable: \emph{any} accurate statistical analysis revealing that smoking causes cancer would have had the same effect on Rick. This book takes the point of view that it is important to distinguish between harms arising from the ethical implications of certain statistical inferences and \emph{privacy harms} that result specifically from the collection and use of an individual's data. This redefines what good statistical privacy guarantees should ensure and the refined desiderata: the goal is to ensure that \emph{one can learn almost nothing new about an individual that could not have been inferred had they not shared their data}. It is important to emphasize that Differential Privacy is not an algorithm; it is a \emph{definition} or \emph{requirement} for privacy. The remainder of this chapter aims at accomplishing two goals: {\it (i)} to carefully formalize the definition and guarantees provided by Differential Privacy, and {\it (ii)} to cover basic algorithmic techniques and building blocks for achieving Differential Privacy.

\begin{figure}[!t]
    \centering
    \includegraphics[width=0.99\linewidth]{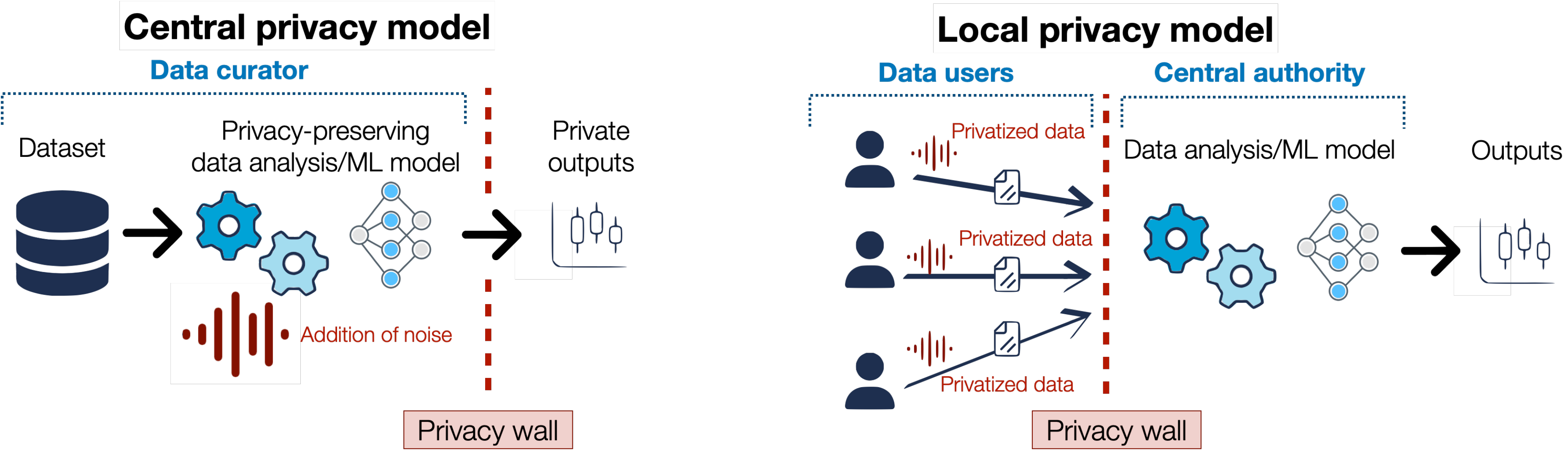}
    \caption{Actors and models in the Privacy Preserving data processing pipeline. Central privacy model (left) and Local privacy model (right).}     \label{fig:privacy_models}
\end{figure}

\subsection{Where to Guarantee Differential Privacy? Local vs Central Models}

Implementing Differential Privacy requires careful consideration of the context in which privacy guarantees are applied, particularly regarding the underlying trust model. The degree of trust placed in data curators or aggregators significantly influences the design and effectiveness of privacy-preserving mechanisms. This book examines the two primary frameworks within privacy-preserving ecosystems: the \emph{centralized} model and the \emph{distributed} (or local) model, each with distinct characteristics and implications for privacy management.

In a \emph{centralized} framework, all data collection, storage, and processing occur at a single, central location managed by a trusted data curator. This central entity has direct access to the raw data and is responsible for implementing and monitoring all privacy-preserving mechanisms. The assumption here is that the data curator will faithfully protect individual privacy and handle data responsibly. This setup represents the \emph{central model} in Differential Privacy, as illustrated in Figure~\ref{fig:privacy_models} (left). Conversely, a \emph{distributed} framework keeps data decentralized, residing at its point of origin—such as personal devices or local databases. Privacy-preserving algorithms are executed locally by the data contributors themselves, and only essential, processed information is communicated to a central authority. For instance, in a typical federated learning setup, raw user data remains on their devices, and only privatized versions of the data—such as noisy data points or gradient updates—are sent to the central aggregator. This approach embodies the \emph{local model} in Differential Privacy, depicted in Figure~\ref{fig:privacy_models} (right).  Both centralized and distributed frameworks offer distinct advantages and challenges concerning privacy. 

Centralized systems concentrate data in one location, creating a single point of failure. If the central entity fails to protect the data—due to a breach or misuse—it can lead to widespread privacy violations affecting all users. Moreover, centralized frameworks require users to \emph{trust} that the platform will implement privacy measures correctly and not exploit the data for unintended purposes. 
However, the centralized setting offers significant advantages in terms of data utility and algorithmic flexibility. Because the data curator has access to the raw data, they can inject carefully calibrated noise at the aggregate level, often requiring much less noise to achieve the same privacy guarantees compared to the local setting. This means that analyses and models derived in the centralized setting can be of higher quality and accuracy. Additionally, the centralized model allows for the development of more complex algorithms that require inspecting the data and estimating joint statistics before adding noise—a process that is often challenging or infeasible in the local model.

In contrast, {\em the distributed model reduces the need for trust in a central authority since privacy is enforced locally by each user}. Even if the central aggregator is compromised, the attacker gains access only to the noisy, privacy-protected data that users have shared. This mitigates the risk associated with a central point of failure and enhances individual control over personal data.
However, while the distributed model enhances privacy by minimizing trust requirements, it also introduces additional complexity in implementing privacy-preserving protocols. Each user must correctly execute the algorithms, which may involve sophisticated computations. Additionally, because each user adds noise to their data independently, the aggregated results may suffer from reduced accuracy due to the accumulation of noise. The centralized model, on the other hand, allows for more efficient privacy-utility trade-offs. Since the data curator has access to the raw data, they can add carefully calibrated noise at the aggregate level, achieving the desired privacy guarantees with potentially less impact on data utility. This centralized addition of noise can result in higher-quality data analyses compared to the distributed approach. 

\paragraph{Distinguishing data privacy from data security.}
It is important to differentiate between \emph{data privacy} and \emph{data security} within the landscape of privacy-preserving technologies. Data \emph{security} focuses on preventing unauthorized access to data, implementing measures such as encryption, authentication protocols, and intrusion detection systems to safeguard against breaches and cyber threats. These measures are designed to protect data from external attackers and unauthorized insiders.  However, security alone is insufficient to prevent the \emph{inference} of individual-level sensitive information from released data. In contrast, data \emph{privacy}, as addressed in this book, aims at preventing \emph{inference} of individual information when data, statistics, or machine learning models are released. Even when cryptographic security is fully implemented,   computing a statistic or training a machine learning model can still allow an attacker to infer individual-level information from the computed statistics or the released model alone, without ever breaching the system or accessing the original data. How this can occur was illustrated through our earlier example of a differential attack recovering Rick's salary or health status. Differential Privacy thus provides an \emph{orthogonal and complementary} layer of protection to traditional data security techniques. {\em While data security aims at preventing unauthorized data access, Differential Privacy limits the potential harm from running inference or reconstruction attacks on released databases, statistics, and models.}

\section{Differential Privacy: Formal Definition, Techniques, and Properties}
\label{ch1:sec4}
Differential Privacy is a mathematical framework for measuring and bounding the individuals' privacy risks in a computation.  The concept, first introduced in 2006 by \citeauthor{dwork2006calibrating} in \cite{dwork2006calibrating}, informally states that the presence or absence of any individual record in a dataset should not significantly affect the outcome of a mechanism.  
In this book, a \emph{mechanism} is defined as any computation that can be performed on the data. Differential Privacy deals with randomized mechanisms, and a mechanism is considered \emph{differentially private} if the probability of any outcome occurring is nearly the same for any two datasets that differ in only one record. 
 
In this context, an adversary is any entity attempting to infer sensitive information about individuals from the output of a data analysis. Remarkably, the privacy guarantee of Differential Privacy holds even if the adversary possesses unlimited computing power and complete knowledge of the algorithm and system used to collect and analyze the data. Thus, even if the adversary were to develop new and sophisticated methods, including the attack methods discussed earlier, as well as new attacks that do not yet exist today, or even if new additional external information becomes available, Differential Privacy  provides the exact same level of protection. In this sense, Differential Privacy is considered \emph{future-proof}. 

The section, next, reviews \emph{Randomized Response}, a classic method adopted in surveys for ensuring the privacy of respondents. Originally developed as a survey technique to encourage honest responses to sensitive questions, Randomized Response leverages randomness to protect individual privacy while still allowing researchers to estimate population characteristics accurately. This method serves as a foundational example of how randomness can be systematically used to achieve Differential Privacy, illustrating the principles that guide more complex privacy-preserving mechanisms discussed later in this section, and throughout the book. 

\paragraph{Randomized response.}
Randomized response~\citep{WarnerRandomResp} was proposed by \citeauthor{WarnerRandomResp} in 1965 to privately survey respondents for a potentially sensitive property. The setup is as follows: one wishes to test for how many individuals in a set of respondents have a certain property, $\cal P$, which might be a controversial one to possess, and this might ordinarily lead to a subset of respondents becoming what Warner described as a ``non-cooperative'' group, who might refuse to be surveyed or provide a dishonest answer, introducing unwanted bias in the survey results. To simultaneously ensure that respondents answer honestly (and, as a result, avoid bias due to the aforementioned non-cooperation) and that their privacy is not violated, randomized response provides respondents the ability to deny their response while also preserving the quality of the summary statistics inferred. This is ensured by introducing randomness into the process of surveying as follows.
\begin{enumerate}[leftmargin=*,topsep=2pt,itemsep=2pt]
    \item The respondent takes a fair coin and flips it;
    \item If \emph{tails} is obtained, then the respondent answers truthfully, and if \emph{heads} is obtained, the respondent flips the coin again and
    \begin{enumerate}[leftmargin=*,topsep=2pt,itemsep=2pt]
        \item Responds affirmatively if the outcome is heads;
        \item Responds negatively if the outcome is tails.
    \end{enumerate}
\end{enumerate}
Note that here the outcomes and numbers of coin flips are only known to the respondent. The property of \textit{plausible deniability} allows respondents to be able to deny their responses, and this provides them with privacy guarantees (as it will be elaborated later). While the responses are partly perturbed due to this process, an analyst can recover the expected number of ``Yes'' responses accurately as follows,
$$
\mathbb{E}[\text{Yes}]=\frac 34 n(\text{has }{\cal P})+\frac 14 n(\text{does not have }{\cal P}),
$$
where $\mathbb{E}[\text{Yes}]$ is the expected number of affirmative responses, and $n(X)$ is the number of respondents who claimed to satisfy property $X$.

As will become clear later in this section, the plausible deniability property of randomized response has a strong connection with Differential Privacy.

\subsection{Differential Privacy, Formally}
\label{ch1:DP_def}

Prior to defining Differential Privacy formally, this section formalizes what this privacy notion aims at protecting (dataset) and the means by which an analyst interacts with data (queries).

\begin{wrapfigure}[10]{r}{250pt}
\vspace{-20pt}
\centering\includegraphics[width=0.9\linewidth]{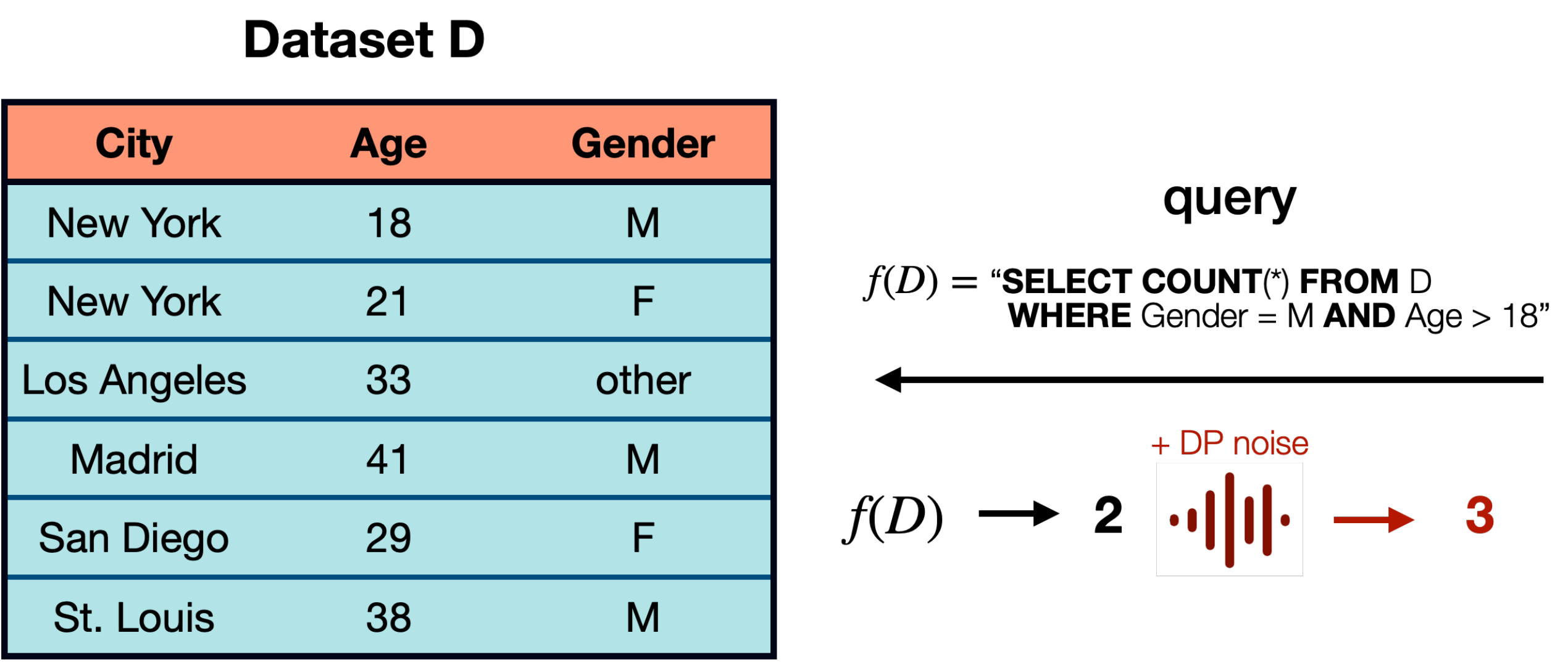}
\caption{Example dataset and query.}
\label{fig:ch1-data}
\end{wrapfigure}

\paragraph{Datasets and queries.}
A \emph{dataset} $D$ is a multi-set of elements in the \emph{data universe} ${\cal U}$. The set of every possible dataset is denoted ${\cal D}$. The data universe ${\cal U}$ is a cross product of multiple \emph{attributes} $U_1, \ldots, U_n$ and has \emph{dimension} $n$. 
For example, Figure \ref{fig:ch1-data}, illustrates a dataset $D$ with three attributes: city, age, and gender. If $C$ is the set of all cities considered, the interval $A=[0,100]$ the set of all ages considered, and $G=\{\text{M}, \text{F}, \text{other}\}$, then $\cal U = C \times A \times G$.
A \emph{numeric query} is a function  $f: {\cal D} \to {\cal R} \subseteq \mathbb{R}^r$ that maps a dataset in some real vector space. For instance, the query $f(D)$ could be an \emph{SQL statement} that counts the number of male individuals over the age of 18 in dataset $D$, as illustrated in Figure \ref{fig:ch1-data}.

The concept of \emph{adjacency} is fundamental in DP. It frames the \emph{unit of change} that Differential Privacy seeks to protect against, ensuring that the presence or absence of any single individual's data does not significantly alter the outcomes of data analysis. There are two common ways to define adjacency in the context of Differential Privacy, reviewed next.

\begin{definition}[Add/remove adjacency]
Two datasets  $D$ and $D'$ are said adjacent under the add/remove notion, denoted as 
$D \sim D'$, if  $|D \Delta D'| = 1$, where $\triangle$ is the symmetric difference of two sets. 
\end{definition}

\noindent
In other words, two datasets are defined as adjacent if one can be obtained from the other by either adding or removing the data of a single individual. This model is particularly relevant when considering the impact of an individual's participation or absence in the dataset. 

\begin{definition}[Exchange adjacency]
Two datasets $D$ and $D'$ are said \emph{adjacent} under the exchange notion, denoted as $D \sim_{\rightleftarrows} D'$, if $D'$ is obtained from $D$ by successively removing one record and then adding a (possibly different) record. That is, there exist elements $d \in D$ and $d' \in \mathcal{U}$ such that:
\(
	D' = (D \setminus \{d\}) \cup \{d'\}.
\)
This implies that $|D| = |D'|$ and $|D \triangle D'| = 2$.
\end{definition}

\noindent
In this notion, adjacent datasets differ in the data of exactly one individual but have the same size. This definition is suited to scenarios where the alteration of data within a constant-size dataset is the primary concern, and can be viewed as the removal followed by the addition of one individual. 

The choice between add/remove or exchange adjacency has some implications for how Differential Privacy is applied as it directly affects the computation of global sensitivity, introduced next, which measures the maximum change in the output of a function for adjacent datasets. This chapter, and generally the book unless specified otherwise, adhere to the add/remove notion of adjacency. 


\paragraph{Global sensitivity.} 
The impact of a single individual's data on the overall analysis is measured through the concept  of \emph{global sensitivity}. Formally, the global sensitivity of a function $f: \mathcal{D} \rightarrow {\cal R}$ is defined as the maximum difference in the output of $f$ over all pairs of adjacent datasets  $D\sim D' \in {\cal D}$, measured with respect to the $\ell_p$ norm: 
\begin{equation}
\label{eq:ch1:sensitivity}
    \Delta_p f = \max_{D \sim D'} \| f(D) - f(D') \|_p.
\end{equation} 
In simpler terms, it measures how much the output of a function can change when an individual's data is added or removed  from the dataset. This measurement provides a basis for determining the amount of noise that needs to be added to the function's output to achieve privacy.
For example, the query considered in Figure \ref{fig:ch1-data} that counts the number of individuals satisfying a certain property in a dataset has global sensitivity $1$, since adding or removing a single individual in the dataset can affect the final count by at most $1$. 
Suppose instead that the task is to compute the average age of all individuals in the dataset. Then the global sensitivity of this average function would be 
\[ 
 \Delta_p f = \frac{\max(A) - \min(A)}{|D|} = \frac{100}{|D|},
\]
where $A$ represents the range of possible ages (assuming ages range from 0 to 100).
In this chapter, the $\ell_1$-sensitivity $\Delta_1 f$ is denoted with $\Delta f$.

\paragraph{Differential Privacy.}
These examples illustrate how a single individual's data can influence the output of a function applied to a dataset. This influence is central to the concept of Differential Privacy. The impact of adding or removing an individual's data varies depending on the type of function in question---whether it's calculating sums, averages, or any other measure of the data. This sensitivity measurement tells us how much  the output of the target function need to be adjusted in order to protect an individual's privacy. Differential Privacy achieves this by adding noise to the function's output, by an amount calibrated to the function sensitivity. This approach ensures that the presence or absence of any single individual's data does not significantly alter the output, thereby masking their participation. 

\begin{definition}[Differential Privacy \citep{dwork2006calibrating}] 
A randomized mechanism ${\cal M} : {\cal D} \to {\cal R}$ with domain ${\cal D}$ and range ${\cal R}$ is \emph{($\varepsilon, \delta$)-differentially private} if, for any event $S \subseteq {\cal R}$ and any pair $D, D' \in {\cal D}$ of adjacent datasets:
\begin{equation} 
\label{eq:dp_def} 
Pr[{\cal M}(D) \in S] \leq \exp(\varepsilon) Pr[{\cal M}(D') \in S] + \delta, 
\end{equation} 
where the probability is calculated over the randomness of ${\cal M}$.
\end{definition}
\noindent A differentially private mechanism maps a dataset to a distribution over the possible outputs because, e.g., it adds random noise or makes randomized choices. The released DP output is a single random sample drawn from this distribution. The level of privacy is controlled by the parameter $\varepsilon \geq 0$, called the \emph{privacy loss}, with values close to $0$ denoting strong privacy, and a secondary parameter $\delta$ which can be loosely interpreted as a margin of error.

First, observe that the inequality: 
\[
Pr[{\cal M}(D) \in S] \leq \exp(\varepsilon) Pr[{\cal M}(D') \in S],
\]
(here with $\delta = 0$ for simplicity of exposition) holds for any $D$ and $D'$. In particular, since it holds for any pair of neighboring databases, it also holds when \emph{swapping} the roles of $D$ and $D'$ in the above definition. Hence, an $(\epsilon,0)$-differentially private algorithm must also satisfy 
\[
Pr[{\cal M}(D') \in S] \leq \exp(\varepsilon) Pr[{\cal M}(D) \in S].
\]
This directly implies the ``stronger'' inequality below:
\begin{align}\label{eq:dp_symmetric}
\exp(-\varepsilon) Pr[{\cal M}(D') \in S] \leq Pr[{\cal M}(D) \in S] \leq \exp(\varepsilon) Pr[{\cal M}(D') \in S],
\end{align}
which highlights that the probabilities of any event $S$ under ${\cal M}$ applied to $D$ and $D'$ are close to each other, controlled by $\varepsilon$.

To intuitively understand these parameters, think of $\varepsilon$ as a knob controlling the level of privacy. Lowering $\varepsilon$ enhances privacy by making the outputs less sensitive to changes in any individual's data. As $\varepsilon$ approaches zero (with $\delta = 0$), the inequality in Equation~\eqref{eq:dp_symmetric} forces the distributions ${\cal M}(D)$ and ${\cal M}(D')$ to become nearly identical. This means \emph{more} privacy, as distinguishing between $D$ and $D'$, which is necessary to recover the data of the individual that differs across both databases, becomes harder. When $\varepsilon = 0$, $Pr[{\cal M}(D) \in S]  = Pr[{\cal M}(D') \in S]$ for all $S$; i.e., the output is independent of and does not use the input dataset, providing \emph{perfect} privacy, but \emph{no utility}---mathematically, an easy implication of $\varepsilon = 0$ is that our mechanism must have a (trivially) constant output across all datasets. When $\varepsilon \to +\infty$, the inequality is always satisfied by any mechanism and \emph{no privacy} is guaranteed.

The parameter $\delta$ serves as a margin of error. It is typically a small number close to zero that defines a \emph{failure threshold} allowing the DP guarantee not to hold with a probability of up to $\delta$\footnote{$(\epsilon,0)$-DP most of the time, except with probability $\delta$, and $(\epsilon,\delta)$-DP are closely related but not exactly equivalent.}. In practice, $\delta$ is chosen to be a negligible value, often much smaller than $\frac{1}{N}$, where $N$ is the size of the dataset. This ensures that the likelihood of disclosing sensitive information about any individual remains extremely low. A mechanism satisfying $(\varepsilon, 0)$-differential privacy is said to satisfy \emph{pure} Differential Privacy or $\varepsilon$-Differential Privacy. 

The guarantees of Differential Privacy are illustrated in Figure~\ref{fig:ch1-data}, which shows the distribution of outputs from a differentially private mechanism applied to two adjacent datasets $D$ and $D'$. The blue and red curves represent the probability distributions of the outputs for $D$ and $D'$, respectively.
The left figure shows how the probability distributions over outputs must be close to each other for adjacent datasets. The right figure quantifies the difference between the probabilities, showing that the log-probabilities of any outcome $x$ differ by at most $\varepsilon$. This means that the ratio of probabilities is bounded by $e^{\varepsilon}$, as required by the definition.
One can see that requiring a smaller $\varepsilon$ forces the distributions to be closer to each other across $D$ and $D'$, making it harder to distinguish between the two databases and hence providing stronger privacy protections. 

\begin{figure}[!t]
    \centering
    \includegraphics[width=0.9\textwidth]{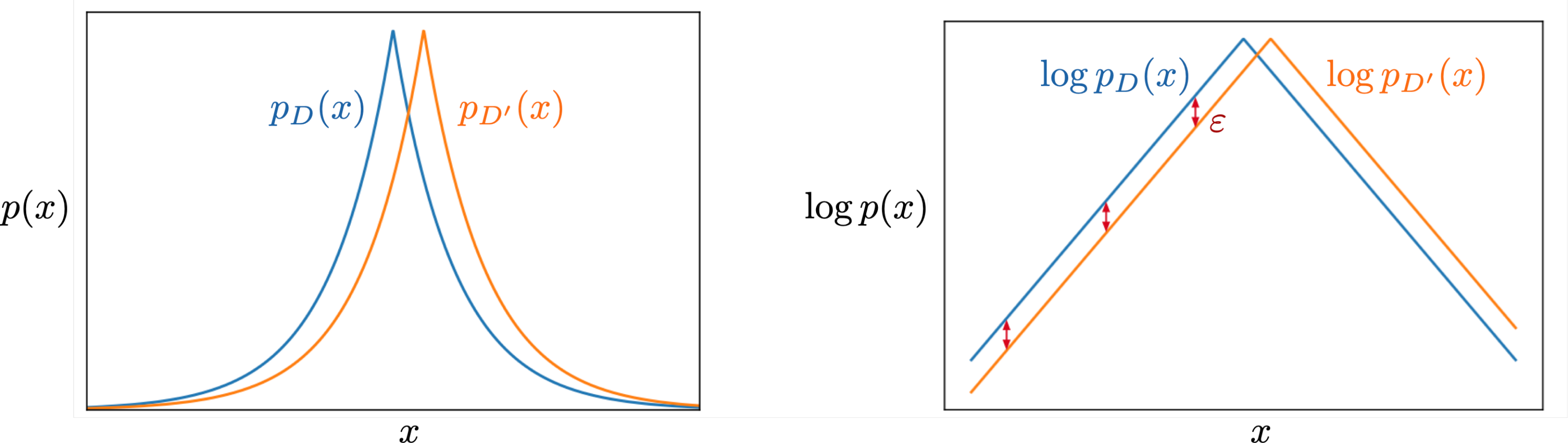}
    \caption{An illustration of the $\varepsilon$-DP guarantee (here, using the Laplace mechanism of Section~\ref{ssub:the_laplace_mechanism}). The log-probability of a value to be output by a mechanism given two neighboring datasets is bounded by $\varepsilon$.}
    \label{fig:laplace_motiv}
\end{figure}


\subsection{Formal Properties of Differential Privacy}
\label{ch1:sec:4.2} 
This section formalizes the properties guaranteed by Differential Privacy, and how they match the desirata described in Section~\ref{ch1:sec2}. The composition, group privacy, and post-processing properties are derived directly from the direction of Differential Privacy, and do not assume a specific mechanism like Randomized Response. As such, composition. group privacy, and post-processing hold for \emph{any} differentially private mechanism, i.e. \emph{any mechanism that satisfies requirement~\eqref{eq:dp_def}}.

\paragraph{Composition.} Composition ensures that a combination of differentially private mechanisms (whether the mechanisms release privatized data, statistics on data, or learning models) preserves Differential Privacy. Composition is a key concept that enables the construction of complex algorithms by combining simpler primitives. It facilitates \emph{privacy accounting}, the rigorous analysis of the overall privacy loss of a composite and potentially complex algorithm by aggregating the privacy guarantees of individual primitives. More formally, it can be stated as follows \citep{DworkRoth2014}:

\begin{theorem}[Composition] \label{th:composabilty} 
    Let ${\cal M}_i:{\cal D} \to {\cal R}_i$ be an $\varepsilon_i$-differentially private 
    mechanism for $i \in \{1, 2\}$. Then, their composition, defined 
    as ${\cal M}(D) = ({\cal M}_1(D), {\cal M}_2(D))$, is $(\varepsilon_1+\varepsilon_2)$-differentially private.  
\end{theorem}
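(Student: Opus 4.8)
The plan is to prove the basic (sequential) composition theorem for pure Differential Privacy directly from the definition, by analyzing the joint output distribution of the composed mechanism. Since $\mathcal{M}_1$ and $\mathcal{M}_2$ are applied to the same dataset $D$ and output an ordered pair, the first step is to fix an arbitrary pair of adjacent datasets $D \sim D'$ and an arbitrary event $S \subseteq \mathcal{R}_1 \times \mathcal{R}_2$, and to show that $Pr[\mathcal{M}(D) \in S] \leq \exp(\varepsilon_1 + \varepsilon_2)\, Pr[\mathcal{M}(D') \in S]$, which is exactly the $(\varepsilon_1+\varepsilon_2)$-DP requirement of Equation~\eqref{eq:dp_def} with $\delta = 0$.

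The key structural assumption I would make explicit is that the randomness of $\mathcal{M}_1$ and $\mathcal{M}_2$ is \emph{independent}, so that the density (or probability mass) of the joint mechanism factorizes: writing the joint output as a point $(r_1, r_2)$, the density of $\mathcal{M}(D)$ at $(r_1,r_2)$ equals the product of the density of $\mathcal{M}_1(D)$ at $r_1$ and the density of $\mathcal{M}_2(D)$ at $r_2$. First I would establish the bound at the level of pointwise densities: for each outcome $r_i$, the $\varepsilon_i$-DP guarantee of $\mathcal{M}_i$ gives (in its pointwise form) that the density of $\mathcal{M}_i(D)$ at $r_i$ is at most $\exp(\varepsilon_i)$ times the density of $\mathcal{M}_i(D')$ at $r_i$. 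Multiplying these two inequalities across $i \in \{1,2\}$ yields that the joint density ratio at any point $(r_1, r_2)$ is bounded by $\exp(\varepsilon_1)\exp(\varepsilon_2) = \exp(\varepsilon_1 + \varepsilon_2)$.

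The final step is to lift this pointwise density bound to the probability of an arbitrary event $S$ by integrating (or summing, in the discrete case) the densities over $S$: since the integrand for $D$ is everywhere at most $\exp(\varepsilon_1+\varepsilon_2)$ times the integrand for $D'$, the integral over $S$ inherits the same multiplicative bound, giving $Pr[\mathcal{M}(D) \in S] \leq \exp(\varepsilon_1+\varepsilon_2)\, Pr[\mathcal{M}(D') \in S]$. Because $D \sim D'$ and $S$ were arbitrary, this establishes the claim.

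The main subtlety, and the step I would be most careful about, is the passage from the event-based definition of DP in Equation~\eqref{eq:dp_def} to a \emph{pointwise} density ratio bound, and back again. For discrete output spaces this is clean: one applies the DP inequality to the singleton events $\{r_i\}$ to get the pointwise bound, then sums over $S$. For continuous ranges one must either argue via densities (which requires the mechanisms to admit densities with respect to a common base measure) or work directly with the event-based inequality applied to product events and invoke a monotone-class / product-measure argument to handle general measurable $S \subseteq \mathcal{R}_1 \times \mathcal{R}_2$. I would state the independence assumption on the two mechanisms' internal randomness explicitly, since without it the factorization fails and the clean additive accounting of privacy loss does not hold; this is the real content of the theorem and the place where an incautious proof would go wrong.
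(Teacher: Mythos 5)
Your proposal is correct and follows essentially the same route as the paper's proof: factorize the joint distribution using the independence of the two mechanisms' randomness, bound each factor by $\exp(\varepsilon_i)$ via the individual DP guarantees, and multiply. The only difference is one of care rather than substance---the paper bounds the ratio only for product events $R_1\times R_2$, whereas you work at the level of pointwise density ratios and integrate over an arbitrary event $S\subseteq{\cal R}_1\times{\cal R}_2$, which correctly addresses the measure-theoretic gap the paper's proof leaves implicit.
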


\begin{proof}
For any $(R_1,R_2) \subseteq {\cal R}_1\times {\cal R}_2$ and any two neighboring datasets $D\sim D'$,
    \begin{align*}
        \frac{\Pr[{\cal M}(D) \in (R_1,R_2)]}{\Pr[{\cal M}(D') \in (R_1,R_2)]} &= \frac{\Pr[{\cal M}_1(D) \in R_1]\Pr[{\cal M}_2(D) \in R_2]}{\Pr[{\cal M}_1(D') \in R_1]\Pr[{\cal M}_2(D') \in R_2]}\\
        &= \left(\frac{\Pr[{\cal M}_1(D) \in R_1]}{\Pr[{\cal M}_1(D') \in R_1]}\right)\left(\frac{\Pr[{\cal M}_2(D) \in R_2]}{\Pr[{\cal M}_2(D') \in R_2]}\right)\\
        &\leq \exp(\varepsilon_1)\exp(\varepsilon_2)\\
        &=\exp(\varepsilon_1 + \varepsilon_2).
    \end{align*}
\end{proof}

\noindent
This argument can be generalized to for $k$ differentially private mechanisms by induction. More precisely, if ${\cal M}_i:{\cal D} \to {\cal R}_i$ is an $\varepsilon_i$-differentially private mechanism for $i = 1, \ldots, k$. Then, the composition ${\cal M}(D) = ({\cal M}_1(D), \ldots, {\cal M}_k(D))$ is $(\sum_{i=1}^{k}\varepsilon_i)$-differentially private. The result above is also called \emph{simple composition}, as it deals with pure Differential Privacy mechanisms. 

\paragraph{Group privacy.}
The Differential Privacy notions discussed so far bound differences in output distributions of the mechanism for any pairs of adjacent datasets, i.e. for datasets $D, D'$ such that $\vert D \Delta D'\vert=1$. However, what is not immediately clear is the case when two datasets differ in more than one individual's data. Fortunately, Differential Privacy yields group privacy guarantees that bound this difference for datasets that differ in $k$ entries, for $k>0$:
\begin{theorem}[Group privacy]
Let ${\cal M} : \cal D \to \cal R$ be an $\varepsilon$-differentially private algorithm. Suppose $D$ and $D'$ are two datasets that differ in exactly $k$ entries. Then, for all $S \subseteq \cal R$:
\[
    \Pr[{\cal M}(D) \in S] \leq \exp(k\varepsilon) \Pr[{\cal M}(D') \in S].
\]
\end{theorem}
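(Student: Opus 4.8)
The plan is to prove the statement by a hybrid (telescoping) argument that reduces group privacy to repeated application of the adjacent-dataset guarantee already built into the definition of $\varepsilon$-differential privacy. Since $D$ and $D'$ differ in exactly $k$ entries, I would first construct a chain of intermediate datasets $D = D_0, D_1, \ldots, D_k = D'$ in which each consecutive pair $D_{i-1}$ and $D_i$ is adjacent in the add/remove sense, i.e.\ $D_{i-1} \sim D_i$. Concretely, under add/remove adjacency, two datasets differing in $k$ entries satisfy $|D \triangle D'| = k$, so I would enumerate the $k$ records in the symmetric difference and add or remove them one at a time, obtaining a chain in which each step modifies the dataset by a single record.

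With the chain in place, I would apply the definition of $\varepsilon$-DP (with $\delta = 0$) to each adjacent pair $(D_{i-1}, D_i)$, placing $D_{i-1}$ in the numerator role. This yields, for every event $S \subseteq \mathcal{R}$,
\[
    \Pr[\mathcal{M}(D_{i-1}) \in S] \leq \exp(\varepsilon)\, \Pr[\mathcal{M}(D_i) \in S].
\]
Chaining these $k$ inequalities together and collecting the multiplicative factors gives
\[
    \Pr[\mathcal{M}(D) \in S] \leq \exp(\varepsilon)\, \Pr[\mathcal{M}(D_1) \in S] \leq \cdots \leq \exp(k\varepsilon)\, \Pr[\mathcal{M}(D') \in S],
\]
which is exactly the claimed bound. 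This step is essentially a straightforward induction on $k$: the base case $k=1$ is the definition of $\varepsilon$-DP, and the inductive step multiplies the bound for $k-1$ steps by one additional factor of $\exp(\varepsilon)$.

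I expect the only genuine subtlety---the ``hard part,'' such as it is---to lie in the construction and justification of the chain, specifically in making precise what ``differ in exactly $k$ entries'' means and in verifying that consecutive datasets along the chain are indeed adjacent under the add/remove notion. Once that combinatorial bookkeeping is settled, the analytic content is merely the telescoping product of $k$ copies of the basic guarantee, requiring no further estimates. I would also note that the argument uses only the one-directional inequality from the definition (the factor $\exp(\varepsilon)$ always on the $D_i$ side), so no appeal to the symmetric form in Equation~\eqref{eq:dp_symmetric} is needed, and the result therefore holds verbatim for pure $\varepsilon$-DP.
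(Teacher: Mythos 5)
Your proposal is correct and follows essentially the same route as the paper: both construct a chain of intermediate datasets $D = D^{(0)}, D^{(1)}, \ldots, D^{(k)} = D'$ with consecutive pairs adjacent, apply the one-directional $\varepsilon$-DP inequality at each step, and telescope to obtain the factor $\exp(k\varepsilon)$. The only difference is cosmetic: you devote more attention to justifying the existence of the chain, which the paper simply asserts.
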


\begin{proof}
    Let $D^{(0)}\triangleq D$ and $D^{(k)}\triangleq D'$, and let $D^{(0)}, D^{(1)},\ldots,D^{(k-1)},D^{(k)}$ be a sequence of datasets where $D^{(i)}\sim D^{(i+1)}$ for $i=0,1,\ldots,k-1$. The datasets in this sequence can be thought of as ``intermediate'' datasets when trying to obtain $D'$ by starting with $D$ and changing one entry at a time successively. Then by the DP guarantee of ${\cal M}$, for any $R \subseteq {\cal R}$ and $i\in[k-1]$,
    \[\Pr[{\cal M}(D^{(i)}) \in R]\leq\exp(\varepsilon)\Pr[{\cal M}(D^{(i+1)}) \in R].\]

\noindent
    Then, for any $R \subseteq {\cal R}$,
    \begin{align*}
        \Pr[{\cal M}(D) \in R]&=\Pr[{\cal M}(D^{(0)}) \in R]\\
        &\leq\exp(\varepsilon)\Pr[{\cal M}(D^{(1)}) \in R]\\
        &\leq\exp(2\varepsilon)\Pr[{\cal M}(D^{(2)}) \in R]\\
        &\vdots\\
        &\leq\exp(k\varepsilon)\Pr[{\cal M}(D^{(k)}) \in R]\\
        &=\exp(k\varepsilon)\Pr[{\cal M}(D') \in  R].
    \end{align*}
\end{proof}

\paragraph{Post-processing.}
Another key property of Differential Privacy is post-processing immunity. It ensures that privacy guarantees are preserved by arbitrary data-independent post-processing steps \citep{DworkRoth2014}:
\begin{theorem}[Post-Processing Immunity] \label{th:postprocessing} 
    Let $\mathcal{M}: {\cal D} \to {\cal R}$ be a
    mechanism that is $\varepsilon$-differentially private and $g : {\cal R} \to
    {\cal R}'$ be a data-independent mapping. The mechanism $g \circ \mathcal{M}$ is
    $\varepsilon$-differentially private.  
\end{theorem}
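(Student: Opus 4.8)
The plan is to prove the statement first for the case where the post-processing map $g$ is deterministic, and then observe that the randomized case follows by conditioning. The key idea is that applying a fixed function $g$ to the output of $\mathcal{M}$ cannot create distinguishing power between $D$ and $D'$ that was not already present in $\mathcal{M}(D)$ versus $\mathcal{M}(D')$; intuitively, post-processing only coarsens or transforms the output, and the differential privacy inequality is preserved under taking preimages.

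For the deterministic case, I would fix an arbitrary event $T \subseteq \mathcal{R}'$ and a pair of adjacent datasets $D \sim D'$. The crucial step is to rewrite the event $\{(g \circ \mathcal{M})(D) \in T\}$ in terms of the original mechanism: since $g$ is a fixed function, $(g \circ \mathcal{M})(D) \in T$ if and only if $\mathcal{M}(D) \in g^{-1}(T)$, where $g^{-1}(T) = \{r \in \mathcal{R} : g(r) \in T\}$ is the preimage of $T$ under $g$. This preimage is itself a subset $S \subseteq \mathcal{R}$, i.e., a valid event for the original mechanism $\mathcal{M}$. I would then apply the $\varepsilon$-DP guarantee of $\mathcal{M}$ directly to $S = g^{-1}(T)$:
\[
\Pr[(g \circ \mathcal{M})(D) \in T] = \Pr[\mathcal{M}(D) \in g^{-1}(T)] \leq \exp(\varepsilon)\, \Pr[\mathcal{M}(D') \in g^{-1}(T)] = \exp(\varepsilon)\, \Pr[(g \circ \mathcal{M})(D') \in T].
\]
Since $T$ was arbitrary, this establishes that $g \circ \mathcal{M}$ is $\varepsilon$-differentially private.

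To handle a general (possibly randomized) post-processing map, I would express $g$ as a distribution over deterministic functions, or equivalently condition on $g$'s internal randomness $\omega$, which is independent of the data. For each fixed $\omega$, the map $g_\omega$ is deterministic and the argument above applies; integrating (or summing) the inequality over the distribution of $\omega$ preserves the bound, since a convex combination of inequalities of the form $a_\omega \leq \exp(\varepsilon) b_\omega$ yields $\int a_\omega \leq \exp(\varepsilon) \int b_\omega$. The main subtlety to be careful about is ensuring that the preimage $g^{-1}(T)$ is a legitimate measurable event so that the DP inequality of $\mathcal{M}$ may be invoked on it; for discrete output spaces this is immediate, and in the general case it follows from measurability of $g$. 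I expect this measurability bookkeeping to be the only real obstacle, as the algebraic core of the argument is the simple and clean preimage identity.
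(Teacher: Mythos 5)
Your proposal is correct and follows essentially the same route as the paper's proof: the deterministic case via the preimage identity $\Pr[(g\circ\mathcal{M})(D)\in T]=\Pr[\mathcal{M}(D)\in g^{-1}(T)]$ followed by a direct application of the DP inequality, and the randomized case by viewing $g$ as a convex combination of deterministic maps and averaging the resulting inequalities. Your added remark about measurability of $g^{-1}(T)$ is a reasonable refinement the paper glosses over, but it does not change the argument.
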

\begin{proof}
    The proof first considers a deterministic mapping $g:{\cal R} \to{\cal R}'$. Let $\tilde S\triangleq\{r\in{\cal R}: g(r)\in S\}$, $\forall\,S\subseteq{\cal R}'$. Then for any two neighboring datasets $D\sim D'$,
    \begin{align*}
        \Pr[g\circ{\cal M}(D)\in S]&=\Pr[{\cal M}(D)\in\tilde S]\\
        &\leq\exp(\varepsilon)\Pr[{\cal M}(D')\in\tilde S]\\
        &=\exp(\varepsilon)\Pr[g\circ{\cal M}(D')\in S].
    \end{align*}
    This proves post-processing immunity for deterministic functions. To extend this guarantee to randomized functions, note that randomized functions can be viewed as a distribution over deterministic functions, and, in particular as a convex combination of deterministic functions. Given that a convex combination of differentially private mechanisms (here each mechanism is obtained by composing each deterministic function with the mechanism ${\cal M}$) is also differentially private, the result follows.
\end{proof}

\noindent
This property ensures that, once Differential Privacy guarantees are applied, any further analysis or manipulation of the protected results will not compromise its privacy guarantees. Post-processing  significantly expands the scope and applicability of Differential Privacy algorithms in real-world applications, as shown in Part III.

\paragraph{Quantifiable privacy-accuracy trade-offs.} The last important property, mentioned in Section~\ref{ch1:sec2}, the trade-off between privacy and accuracy can be quantified exactly. Privacy-accuracy trade-offs are \emph{mechanism-level} properties: each mechanism has its own trade-off. The privacy-accuracy trade-offs of the main building blocks are described later in this section, including the privacy-accuracy trade-offs of \emph{Randomized Response} in Section~\ref{ch1:ssub:rr}, of the \emph{Laplace Mechanism} in Section~\ref{ssub:the_laplace_mechanism}, and of the \emph{Gaussian Mechanism} in Section~\ref{ssub:the_gaussian_mechanism}.

\subsection{The Laplace Mechanism}
\label{ssub:the_laplace_mechanism}

\noindent
The Laplace Distribution with 0 mean and scale $b$ has a probability density function $\text{Lap}(x|b) = \frac{1}{2b}
e^{-\frac{|x|}{b}}$. The Laplace mechanism is a differentially private mechanism based on the Laplace distribution for answering numeric queries \citep{dwork2006calibrating}. It is a fundamental building block for many DP algorithms described in this book, and itfunctions by simply computing the output of the query $f$ and then perturbing each coordinate with noise drawn from the Laplace distribution. The scale $b$ of the noise is calibrated to the query sensitivity $\Delta f$ divided by $\varepsilon$:
\begin{definition}[The Laplace Mechanism]
    Let $f: {\cal D} \to {\cal R} \subseteq \mathbb{R}^d$ be a numerical query, with $d$ being a positive integer.  The Laplace
mechanism is defined as ${\cal M}_\text{Lap}(D; f,\varepsilon) = f(D) + Z$ where $Z \in
{\cal R}$ is a vector of i.i.d.~samples drawn from $\text{Lap}(\frac{\Delta f}{\varepsilon})$.
\end{definition}

\noindent
The Laplace mechanism adds random noise drawn from the Laplace distribution independently to each of the $d$ dimensions of the query response.

\begin{theorem}[Differential Privacy of The Laplace Mechanism]
\label{th:m_lap} The Laplace
mechanism, ${\cal M}_\text{Lap}$, achieves $(\varepsilon,0)$-Differential Privacy.
\end{theorem}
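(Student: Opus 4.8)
The plan is to work directly with the probability density functions of the two output distributions and to show that their ratio is bounded pointwise by $\exp(\varepsilon)$; integrating this pointwise bound over any event then recovers the definition with $\delta = 0$. First I would fix two adjacent datasets $D \sim D'$ and write $z = (z_1, \ldots, z_d) \in \mathbb{R}^d$ for a candidate output. Since each coordinate of the noise vector $Z$ is an independent draw from $\text{Lap}(\Delta f / \varepsilon)$, the density of $\mathcal{M}_\text{Lap}(D; f, \varepsilon)$ at $z$ factorizes across coordinates as
\[
p_D(z) = \prod_{i=1}^d \frac{\varepsilon}{2\Delta f} \exp\!\left(-\frac{\varepsilon\, |z_i - f(D)_i|}{\Delta f}\right),
\]
and likewise $p_{D'}(z)$ is obtained by replacing $f(D)$ with $f(D')$.

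Next I would form the ratio $p_D(z)/p_{D'}(z)$, in which the normalizing constants cancel, leaving
\[
\frac{p_D(z)}{p_{D'}(z)} = \exp\!\left(\frac{\varepsilon}{\Delta f} \sum_{i=1}^d \bigl(|z_i - f(D')_i| - |z_i - f(D)_i|\bigr)\right).
\]
The key step is the reverse triangle inequality applied coordinatewise, $|z_i - f(D')_i| - |z_i - f(D)_i| \le |f(D)_i - f(D')_i|$. Summing over $i$ bounds the exponent by $\frac{\varepsilon}{\Delta f}\, \|f(D) - f(D')\|_1$, and by the definition of $\ell_1$ global sensitivity (recalling that in this chapter $\Delta f = \Delta_1 f$) we have $\|f(D) - f(D')\|_1 \le \Delta f$. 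Hence $p_D(z)/p_{D'}(z) \le \exp(\varepsilon)$ for every $z$.

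Finally, to recover the event-based formulation of the Differential Privacy definition, I would integrate the pointwise bound: for any measurable $S \subseteq \mathcal{R}$,
\[
\Pr[\mathcal{M}_\text{Lap}(D) \in S] = \int_S p_D(z)\, dz \le \exp(\varepsilon) \int_S p_{D'}(z)\, dz = \exp(\varepsilon) \Pr[\mathcal{M}_\text{Lap}(D') \in S],
\]
which is exactly $(\varepsilon, 0)$-Differential Privacy. The main point requiring care is the bookkeeping around the $\ell_1$ norm: the argument hinges on the noise scale being calibrated to the $\ell_1$ sensitivity, so that the coordinatewise reverse-triangle bounds sum to precisely $\Delta f$ and cancel the $\Delta f$ in the denominator; had the sensitivity been measured in another norm, this cancellation would fail. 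A secondary subtlety is that we manipulate densities rather than probabilities, so the pointwise ratio bound must be promoted to a statement about events by integration—legitimate here precisely because the bound holds for \emph{all} $z$ simultaneously, which is also what forces $\delta = 0$ with no additive slack.
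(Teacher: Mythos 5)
Your proof is correct and follows essentially the same route as the paper's: form the ratio of the product densities, apply the (reverse) triangle inequality coordinatewise, and bound the resulting exponent by the $\ell_1$ sensitivity. The only difference is that you make explicit the final integration step promoting the pointwise density-ratio bound to a bound on event probabilities, which the paper leaves implicit; this is a welcome bit of added rigor but not a different argument.
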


\begin{proof}
    Let $D\sim D'$ be any two neighboring datasets in ${\cal D}$, and let $p_D$ and $p_{D'}$ be the probability density functions of ${\cal M}_\text{Lap}(D;f,\varepsilon)$ and ${\cal M}_\text{Lap}(D';f,\varepsilon)$, respectively. 
    Then for any $r\in{\cal R}$,
    \begin{align*}
        \frac{p_D(r)}{p_{D'}(r)} &= \prod_{i=1}^d\left(\frac{\exp\left(-\frac{\varepsilon\vert f(D)_i - r_i\vert}{\Delta f}\right)}{\exp\left(-\frac{\varepsilon\vert f(D')_i - r_i\vert}{\Delta f}\right)}\right)\\
        &= \prod_{i=1}^d\exp\left(\frac{\varepsilon(\vert f(D')_i - r_i\vert-\vert f(D)_i - r_i\vert)}{\Delta f}\right)\\
        &\leq \prod_{i=1}^d\exp\left(\frac{\varepsilon(\vert f(D')_i - f(D)_i\vert)}{\Delta f}\right)&&(\text{By the triangle inequality.})\\
        &= \prod_{i=1}^d\exp\left(\frac{\varepsilon\cdot(\Vert f(D) - f(D')\Vert_1)}{\Delta f}\right)&&(\text{By the definition of $\Delta f$}.)\\
        &\leq\exp(\varepsilon).
    \end{align*}
\noindent
The proof is similar for $\frac{p_{D'}(r)}{p_{D}(r)}\leq\exp(\varepsilon)$.
\end{proof}

\noindent  A graphical representation of the densities and log density of two Laplace distributions associated with neighboring datasets $D$ and $D'$ are provided in Figure \ref{fig:laplace_motiv}, respectively. Note how the difference between the log probabilities for $x$ for each of the neighboring datasets $D\sim D'$ is bounded by $\varepsilon$.

\paragraph{Accuracy guarantee of the Laplace Mechanism.}  The accuracy guarantees of the Laplace Mechanism is characterized by the following result.

\begin{theorem}
For any numerical query $f: {\cal D} \to {\cal R} \subseteq \mathbb{R}^d$, and any database $D \in {\cal D}$,
\[
Pr \left[\left\vert f(D) - {\cal M}_{Lap}(D;f;\varepsilon) \right\vert \geq \ln \left(\frac{d}{\beta}\right) \cdot \left(\frac{\Delta f}{\varepsilon}\right) \right] \leq \beta.
\]
\end{theorem}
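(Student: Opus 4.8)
The plan is to interpret the norm $\vert f(D) - {\cal M}_\text{Lap}(D;f,\varepsilon)\vert$ as the $\ell_\infty$ norm of the added noise, reduce the deviation event to a statement about the single largest noise coordinate, and then combine a per-coordinate Laplace tail bound with a union bound over the $d$ coordinates.

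First I would use the definition of the mechanism to write $f(D) - {\cal M}_\text{Lap}(D;f,\varepsilon) = -Z$, where $Z = (Z_1,\ldots,Z_d)$ is a vector of i.i.d.\ samples from $\text{Lap}(\Delta f/\varepsilon)$. Writing $b = \Delta f/\varepsilon$ for the scale, the event $\{\vert f(D) - {\cal M}_\text{Lap}(D;f,\varepsilon)\vert \geq t\}$ is then exactly the event $\{\max_{1\leq i\leq d}\vert Z_i\vert \geq t\}$, i.e.\ that at least one coordinate of the noise exceeds $t$ in magnitude.

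The key quantitative step is the tail bound for a single Laplace variable: integrating the density gives, for any $t\geq 0$,
\[
\Pr[\vert Z_i\vert \geq t] = \int_{\vert x\vert \geq t}\frac{1}{2b}e^{-\vert x\vert/b}\,dx = e^{-t/b}.
\]
A union bound over the $d$ coordinates then yields
\[
\Pr\!\left[\max_{1\leq i\leq d}\vert Z_i\vert \geq t\right] \leq \sum_{i=1}^d \Pr[\vert Z_i\vert \geq t] = d\,e^{-t/b} = d\exp\!\left(-\frac{t\varepsilon}{\Delta f}\right).
\]

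Finally, I would substitute the threshold $t = \ln(d/\beta)\cdot(\Delta f/\varepsilon)$, which makes $\exp(-t\varepsilon/\Delta f) = \beta/d$, so the right-hand side collapses to $d\cdot(\beta/d)=\beta$, giving the claim. The only mildly delicate point is making explicit that the norm in the statement is the $\ell_\infty$ norm, which is what justifies the reduction to a single coordinate; beyond that, both the Laplace tail integral and the union bound are elementary, so I do not anticipate a substantive obstacle.
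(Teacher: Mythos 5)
Your proposal is correct and follows essentially the same route as the paper: the key ingredient in both is the exact Laplace tail identity $\Pr[\vert Z\vert \geq t] = e^{-t/b}$, inverted at the threshold $t = \ln(d/\beta)\cdot(\Delta f/\varepsilon)$. In fact your write-up is more complete than the paper's, which only carries out the $d=1$ case and asserts that it ``generalizes,'' whereas you make explicit both the $\ell_\infty$ reading of the norm and the union bound over the $d$ coordinates that is precisely what produces the $\ln(d/\beta)$ (rather than $\ln(1/\beta)$) factor in the statement.
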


\begin{proof}
The proof is for $d = 1$ for simplicity, but it generalizes for $d > 1$. The proof follows from characterizations of  the tails of the Laplace distribution. For a random variable $Z \sim Lap(b)$ and a real number $\alpha > 0$, 
\[
Pr \left[|Z| \geq \alpha \right] = \exp \left(-\alpha/b\right).
\]
Therefore, given that $f(D) - {\cal M}_{Lap}(D;f;\varepsilon)$ is Laplace with parameter $b = \frac{\Delta f}{\varepsilon}$, it follows that
\[
Pr \left[\left\vert f(x) - {\cal M}_{Lap}(D;f;\varepsilon) \right\vert \geq \alpha \right] = \exp \left(-\alpha \cdot \frac{\varepsilon}{\Delta f}\right) \triangleq \beta.
\]
Solving for $\alpha$ in $\exp \left(-\alpha \cdot \frac{\varepsilon}{\Delta f}\right) = \beta$ leads to
\[
\alpha \cdot \frac{\varepsilon}{\Delta f} =  \ln \left(\frac{1}{\beta}\right) ,
\]
hence
\[
\alpha = \ln \left(\frac{1}{\beta}\right) \cdot \left(\frac{\Delta f}{\varepsilon}\right).
\]
This concludes the proof.
\end{proof}

The accuracy guarantee of the Laplace Mechanism provides a practical way to understand how the added noise affects the utility of the released data while ensuring differential privacy. Essentially, it quantifies the expected deviation between the true value of a numerical query and the noisy output produced by the mechanism. 

\subsection{Answering Private Queries in Practice}
Next, we present two examples to illustrate how the Laplace Mechanism can be applied in practice.

\paragraph{Example 1: Computing the average age. }
Consider a dataset containing the ages of $10,000$ individuals, with ages ranging from 0 to 100 years. The task is to compute the average age while ensuring differential privacy. A practical procedure follows the following steps:
\begin{enumerate}[leftmargin=*,topsep=2pt,itemsep=2pt]
\item \textit{Determine the query function and its sensitivity.}
In this task the query function is the average age,
\[
	f(\text{data}) = \frac{1}{n} \sum_{i=1}^n \text{age}_i,
\]
where $n$ is the number of individuals in the dataset.
The global sensitivity $\Delta f$ of the average function is the maximum change in the output when one individual is added or removed. Since the age can vary between 0 and 100, adding the data about a single individual can affect the sum by at most 100 units. Therefore, the sensitivity is:
\[
	\Delta f = \frac{\text{max age} - \text{min age}}{n} = \frac{100 - 0}{10{,}000} = 0.01.
\]

\item \textit{Apply the Laplace Mechanism.}
The next step is to select the privacy parameter $\varepsilon$ and add noise drawn from the Laplace distribution with scale parameter 
$\frac{\Delta f}{\varepsilon}$. Selecting
$\varepsilon = 0.5$ to obtain a strong privacy guarantee adds the following noise:
\[
	\text{noise} \sim \operatorname{Lap}\left( \frac{\Delta f}{\varepsilon} \right) = \operatorname{Lap}\left( \frac{0.01}{0.5} \right) = \operatorname{Lap}(0.02).
\]
The private query thus reports $f(\text{data}) + \text{noise}$.

\item \textit{Analyze the error bound.}
Additionally, by setting a confidence level $\beta = 0.05$ (meaning that one is 95\% confident in the error bound), the error bound can be computed as,
\[
	\text{Error Bound} = \frac{\Delta f}{\varepsilon} \ln\left( \frac{1}{\delta} \right)
 			= \frac{0.01}{0.5} \ln\left( \frac{1}{0.05} \right) 
 			\approx 0.06 \text{ years}.
\]
This means that, with 95\% confidence, the noisy average age returned by the Laplace Mechanism will differ from the true average age by no more than approximately 0.06 years. If  the privacy parameter is set to $\varepsilon = 1$, allowing for slightly less privacy in exchange for greater accuracy, the error bound decreases to about 0.03 years. Thus, selecting $\varepsilon$ and $\beta$ appropriately ensures that the released data remains both useful and privacy-preserving. 
\end{enumerate}

\paragraph{Example 2: Releasing a histogram.}
Suppose a statistical agency wants to release a histogram showing the number of individuals in different age groups, segmented by gender and region, from a dataset containing a large number of respondents. The age groups could be categorized in intervals (e.g., 0--9, 10--19, \dots, 90+). The goal is to release this histogram while ensuring differential privacy. Note that this is different from the previous task where a single quantity wasreleased. The procedure again follows the the three same steps:
\begin{enumerate}[leftmargin=*,topsep=2pt,itemsep=2pt]
\item \textit{Determine the query function and its sensitivity.}
The query function is the count of individuals in each combination of age group, gender, and region. For count queries, the global sensitivity $\Delta f$ is 1 because adding or removing one individual can change the count in one category by at most 1. 

\item \textit{Apply the Laplace Mechanism.}
The next step consists in selecting a privacy parameter $\varepsilon = 0.5$ for each count in the histogram and adding independent Laplace noise to each cell (i.e., each combination of age group, gender, and region) in the histogram. 
Let  $c_{i,j,k}$ be the true count for age group $i$, gender $j$, and region $k$, and $\tilde{c}_{i,j,k}$ is the private counterpart to be released. The counts are linked by the following formula:
\[
	\tilde{c}_{i,j,k} = c_{i,j,k} + \text{Noise}_{i,j,k}, \quad \text{where } \text{Noise}_{i,j,k} \sim \operatorname{Laplace}\left( \frac{\Delta f}{\varepsilon} \right) = \operatorname{Laplace}(2).
\]

\item \textit{Post-processing to ensure valid counts.}
Notice that the application of real-valued noise to each count may render the resulting privacy-preserving counterpart negative or non-integers, thus producing invalid ouputs. These issues can be corrected by applying a post-processing step, that set any negative noisy counts to zero and round the noisy counts to the nearest integer. Such post-processing steps do not alter the privacy guarantees of the original release and are commonly applied in deployments \cite{CDMS21}.

\item \textit{Analyze privacy and utility.}
    Each count is $\varepsilon$-differentially private with $\varepsilon = 0.5$. Since each individual's data affects only one count, and the counts are disjoint, the overall privacy guarantee remains $\varepsilon = 0.5$. The added Laplace noise has a mean of zero and a scale of 2 and thus the expected absolute error for each count is 2. For categories with large counts, this noise has a relatively small impact. However, for categories with small counts, especially in less populated age groups or regions, the noise can significantly affect the accuracy. For a further analysis of disparate impacts of Differential Privacy on different subpopulations, we refer the reader to the survey \cite{FHZ:ijcai22}. 
\end{enumerate}
Note that other mechanisms can produce integer counts directly without additional rounding, by using discrete noise mechanisms, such as the Geometric mechanism~\cite{GhoshRS2012} and the discrete Laplace mechanism~\cite{KarwaSlavkovic2012}.

\section{Approximate Differential Privacy}
\label{ch1:sec5}

The discussion in the previous section focused on pure Differential Privacy and the mechanisms and guarantees associated with it. The case where $\delta>0$ for $(\varepsilon,\delta)$-DP constitutes a variant of Differential Privacy known as \emph{Approximate Differential Privacy}. Recall that $\delta\in(0,1)$ is the failure probability of the privacy loss bound in the relaxed variant of pure DP, and is meant to be a cryptographically low quantity---that is, so small it is considered negligible for practical purposes, often much less than $\frac{1}{N}$ where $N$ is the dataset size. This allows practitioners to apply other mechanisms that yield better utility than the Laplace mechanism in exchange for a marginal failure probability. Importantly, approximate Differential Privacy retains the composition, group privacy, and post-processing immunity properties provided by pure Differential Privacy.

\subsection{The Gaussian Mechanism}
\label{ssub:the_gaussian_mechanism}
The canonical mechanism for $(\varepsilon,\delta)$-DP is the Gaussian mechanism \citep{DworkRoth2014}. Where the Laplace mechanism adds noise proportionally to the $\ell_1$ sensitivity of a query $f$, $\Delta f$, the Gaussian mechanism uses the $\ell_2$ sensitivity, denoted by $\Delta_2 f$, and defined as in Equation \eqref{eq:ch1:sensitivity} with $p=2$. 
The $\ell_2$ and $\ell_1$ norms enjoy the following relationship: for a vector $x \in \mathbb{R}^d, \|x\|_2 \leq \|x\|_1 \leq \sqrt{d}\|x\|_2$ . Thus, the $\ell_2$ sensitivity can be up to a factor $\sqrt{d}$ less than the $\ell_1$ sensitivity.  The Gaussian distribution with 0 mean and standard deviation $\sigma$ has the probability density function ${\cal N}(x|\sigma)=\frac1{\sigma\sqrt{2\pi}}\exp\left(-\frac{(x-\mu)^2)}{2\sigma^2}\right)$. 
\begin{definition}[Gaussian Mechanism]
    Let $f:{\cal D}\to{\cal R}$ be a numerical query. The Gaussian mechanism is defined as ${\cal M}_\text{Gauss}(D;f,\varepsilon) = f(D)+z$ where $z\in{\cal R}$ is a vector of i.i.d. samples drawn from ${\cal N}\left(0,\sigma^2 I\right)$ where $\sigma\geq\sqrt{2\ln(\frac{1.25}{\delta})}(\nicefrac{\Delta_2 f}{\varepsilon})$.
\end{definition}
\noindent
As with the Laplace mechanism, the numerical query response is $d$-dimensional for some integer $d>0$ as well. Gaussian noise is added to each dimension of the query response independently by the Gaussian mechanism. To highlight a key distinction between the Laplace and Gaussian mechanisms, consider the context of computing the mean of a multivariate dataset, revisited from \citep{kamath2020approximate}. Consider a dataset $D \in \{0,1\}^{n \times d}$ aiming to compute the mean in a privacy-preserving manner, denoted by $f(D) = \frac{1}{n} \sum_{i=1}^{n} D_i$. The maximum discrepancy in $f$ across adjacent datasets is $\frac{1}{n} \bm{1}$, yielding a vector with $\ell_1$ norm of $\frac{d}{n}$ and $\ell_2$ norm of $\sqrt{\nicefrac{d}{n}}$ as the $\ell_1$ and $\ell_2$ sensitivities. The following theorem defines the $(\varepsilon,\delta)$-DP guarantees for the Gaussian mechanism.
\begin{theorem}
    The Gaussian mechanism, ${\cal M}_\text{Gauss}$, achieves $(\varepsilon,\delta)$-Differential Privacy, for $\varepsilon\in(0,1]$ and $\delta\in[0,1]$.
\end{theorem}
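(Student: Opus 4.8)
The plan is to reduce the multivariate analysis to a one-dimensional tail bound on the \emph{privacy loss random variable} and then convert that bound into the $(\varepsilon,\delta)$ guarantee. First I would fix neighboring datasets $D \sim D'$ and write $v = f(D) - f(D')$, so that $\|v\|_2 \le \Delta_2 f$ by definition of the $\ell_2$ sensitivity. Because the Gaussian noise is spherically symmetric and added coordinatewise, the density ratio $p_D(r)/p_{D'}(r)$ depends on the output $r$ only through its component along $v$; I would therefore rotate coordinates so that $v$ lies along a single axis and take the worst case $\|v\|_2 = \Delta_2 f$. This collapses the problem to a scalar Gaussian shifted by $\Delta_2 f$.

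Next I would compute the privacy loss explicitly. Writing $r = f(D) + z$ with $z \sim {\cal N}(0,\sigma^2 I)$, a direct expansion of the two Gaussian densities gives
\[
L(r) \triangleq \ln \frac{p_D(r)}{p_{D'}(r)} = \frac{\langle z, v\rangle}{\sigma^2} + \frac{\|v\|_2^2}{2\sigma^2}.
\]
Since $\langle z, v\rangle$ is mean-zero Gaussian with variance $\sigma^2\|v\|_2^2$, the loss $L$ is itself Gaussian, with mean $\|v\|_2^2/(2\sigma^2)$ and standard deviation $\|v\|_2/\sigma$. The defining inequality~\eqref{eq:dp_def} holds with parameter $\varepsilon$ on every output for which $L(r) \le \varepsilon$, so it suffices to show that the ``bad'' event $\{L > \varepsilon\}$ has probability at most $\delta$.

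The core estimate is thus $\Pr[L > \varepsilon] \le \delta$. Substituting the mean and variance above, the event $\{L > \varepsilon\}$ is equivalent to the normalized noise exceeding a threshold $t = \sigma\varepsilon/\|v\|_2 - \|v\|_2/(2\sigma)$; applying the standard Gaussian tail inequality $\Pr[{\cal N}(0,1) > t] \le \tfrac{1}{t\sqrt{2\pi}}\exp(-t^2/2)$ and plugging in $\|v\|_2 = \Delta_2 f$ reduces the claim to a single-variable inequality in $\sigma$. I expect this to be the main obstacle: one must verify that the prescribed choice $\sigma \ge \sqrt{2\ln(1.25/\delta)}\,\Delta_2 f/\varepsilon$ is exactly what drives this tail below $\delta$, carefully controlling the lower-order term $\|v\|_2/(2\sigma)$ and the tail-bound prefactor. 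This is precisely where the restriction $\varepsilon \in (0,1]$ is used, since it lets one absorb those correction terms into the constant $1.25$. Once the tail bound is in hand, I would finish by partitioning any event $S$ according to whether the realized output lands in the good region $\{L \le \varepsilon\}$: on the good region the density ratio is at most $e^\varepsilon$, while the remaining mass is at most $\delta$, yielding
\[
\Pr[{\cal M}_\text{Gauss}(D) \in S] \le e^\varepsilon\, \Pr[{\cal M}_\text{Gauss}(D') \in S] + \delta,
\]
which is exactly $(\varepsilon,\delta)$-differential privacy. The reversed inequality follows by symmetry in the roles of $D$ and $D'$.
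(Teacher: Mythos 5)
The paper does not actually prove this theorem; it defers entirely to Appendix~A of Dwork and Roth, so there is no in-paper argument to compare against. Your sketch is, in outline, exactly the argument of that appendix: reduce to one dimension via rotational symmetry of the noise, observe that the privacy loss $L(r)=\ln\bigl(p_D(r)/p_{D'}(r)\bigr)=\langle z,v\rangle/\sigma^2+\|v\|_2^2/(2\sigma^2)$ is itself Gaussian, bound the one-sided tail $\Pr[L>\varepsilon]$, and convert a pointwise bound on the good region plus a $\delta$-mass bad region into the inequality~\eqref{eq:dp_def}. All of these steps are correct, including the monotonicity observation that lets you take $\|v\|_2=\Delta_2 f$ as the worst case (the threshold $t=\sigma\varepsilon/\|v\|_2-\|v\|_2/(2\sigma)$ decreases in $\|v\|_2$), and the remark that only the one-sided tail is needed for the event-splitting step.

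The one substantive caveat is that the step you flag as ``the main obstacle'' is in fact the entire technical content of the result: verifying that $\sigma\ge\sqrt{2\ln(1.25/\delta)}\,\Delta_2 f/\varepsilon$ forces $\Pr[L>\varepsilon]\le\delta$ requires a page of careful manipulation of the Gaussian tail bound, a case analysis to control the prefactor $1/(t\sqrt{2\pi})$, and it is precisely there that $\varepsilon\le 1$ and the constant $1.25$ enter. You have correctly identified where that work lives and why the hypothesis on $\varepsilon$ is needed, but you have not carried it out, so what you have is a correct and well-organized roadmap rather than a complete proof. Since the paper itself punts to the same computation in the literature, your proposal is at essentially the same level of completeness as the text it would replace.
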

\noindent
For the proof of this theorem, see Appendix A of \cite{DworkRoth2014}. Notice that, in the original proposition, also reviewed in \cite{DworkRoth2014}, the mechanism is restricted to use $\varepsilon$  within $(0,1]$. However, it is not uncommon to see values of $\varepsilon>1$ in practice, including in various discussions in this book. This restriction was studied and overcome in \cite{Balle2018ImprovingTG}, which provided a more general \emph{analytical Gaussian mechanism} that holds for $\varepsilon>1$ as well. While the details of the DP guarantee of the analytical Gaussian mechanism are beyond the scope of this text, the mechanism and the associated $(\varepsilon,\delta)$-DP is defined as follows.

\begin{theorem}(Analytical Gaussian Mechanism \citep{Balle2018ImprovingTG}). 
    Let $f:{\cal D}\to{\cal R}$ be a numerical query with global $\ell_2$ sensitivity $\Delta_2 f$. $\forall\,\varepsilon>0$ and $\delta\in[0,1]$, the Gaussian mechanism ${\cal M}_\text{Gauss}(D;f,\varepsilon) = f(D)+z$ with $z\sim{\cal N}(0,\sigma^2 I)$ satisfies $(\varepsilon,\delta)$-Differential Privacy if and only if
    \[\Phi\left(\frac{\Delta_2 f}{2\sigma}-\frac{\varepsilon\sigma}{\Delta_2 f}\right)-e^\varepsilon\Phi\left(-\frac{\Delta_2 f}{2\sigma}-\frac{\varepsilon\sigma}{\Delta_2 f}\right)\leq \delta.\]
    Where $\Phi(t)=\Pr[{\cal N}(0,1)\leq t]=\frac{1}{\sqrt{2\pi}}\int_{-\infty}^{t}e^{-\nicefrac{y^2}{2}}dy$ is the CDF of the standard univariate Gaussian distribution.
\end{theorem}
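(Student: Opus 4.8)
The plan is to characterize $(\varepsilon,\delta)$-DP through the \emph{privacy loss random variable} together with its worst-case event, reduce the multivariate Gaussian comparison to a one-dimensional one by rotational invariance, evaluate the resulting expression in closed form, and finally verify that it is maximized exactly at the sensitivity bound $\mu=\Delta_2 f$. First I would record the standard reformulation of approximate DP: for a fixed pair of neighbors $D\sim D'$ with output densities $p_D,p_{D'}$, the quantity $\sup_{S}\big(\Pr[\mathcal{M}(D)\in S]-e^\varepsilon\Pr[\mathcal{M}(D')\in S]\big)$ is attained at the event $S^\star=\{o : p_D(o)>e^\varepsilon p_{D'}(o)\}$ (the \emph{hockey-stick divergence}). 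Writing the privacy loss $L(o)=\ln\!\big(p_D(o)/p_{D'}(o)\big)$, this set is precisely $\{L>\varepsilon\}$, so $(\varepsilon,\delta)$-DP for the pair is equivalent to
\[
\Pr_{o\sim\mathcal{M}(D)}[L(o)>\varepsilon]-e^\varepsilon\Pr_{o\sim\mathcal{M}(D')}[L(o)>\varepsilon]\le\delta .
\]
This single identity supplies both directions of the ``if and only if'': $S^\star$ is an admissible event (giving the forward implication), and no event beats it (giving the converse).

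Next I would reduce to one dimension. Since $z\sim\mathcal{N}(0,\sigma^2 I)$ is rotationally invariant, the pair $(\mathcal{M}(D),\mathcal{M}(D'))$ is, up to an orthogonal change of coordinates, indistinguishable from $(\mathcal{N}(\mu,\sigma^2),\mathcal{N}(0,\sigma^2))$ on $\mathbb{R}$, where $\mu=\|f(D)-f(D')\|_2\le\Delta_2 f$. Expanding the log-density ratio gives $L(o)=\mu o/\sigma^2-\mu^2/(2\sigma^2)$, an affine image of a Gaussian: under $\mathcal{M}(D)$ one gets $L\sim\mathcal{N}(\eta,2\eta)$ and under $\mathcal{M}(D')$ one gets $L\sim\mathcal{N}(-\eta,2\eta)$, with $\eta=\mu^2/(2\sigma^2)$. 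Substituting these into the two tail probabilities and simplifying with $\sqrt{2\eta}=\mu/\sigma$ yields
\[
g(\mu)\;\triangleq\;\Phi\!\left(\frac{\mu}{2\sigma}-\frac{\varepsilon\sigma}{\mu}\right)-e^\varepsilon\,\Phi\!\left(-\frac{\mu}{2\sigma}-\frac{\varepsilon\sigma}{\mu}\right),
\]
so the pair $(D,D')$ meets the DP constraint iff $g(\mu)\le\delta$.

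The remaining step---and the one I expect to be the crux---is to show that the worst case over \emph{all} admissible neighbors occurs at the largest feasible $\mu$, i.e.\ that $g$ is increasing on $(0,\infty)$, so that $\sup_{0<\mu\le\Delta_2 f}g(\mu)=g(\Delta_2 f)$. Writing $a(\mu),b(\mu)$ for the two arguments of $\Phi$ above and $\phi$ for the standard normal density, differentiation gives $g'(\mu)=\phi(a)\,a'(\mu)-e^\varepsilon\phi(b)\,b'(\mu)$. The key simplification is the algebraic identity $a^2-b^2=-2\varepsilon$ (since $a-b=\mu/\sigma$ and $a+b=-2\varepsilon\sigma/\mu$), which forces $\phi(a)=e^\varepsilon\phi(b)$ and collapses the derivative to $g'(\mu)=e^\varepsilon\phi(b)\,(a'(\mu)-b'(\mu))=e^\varepsilon\phi(b)/\sigma>0$. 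Monotonicity then localizes the supremum at $\mu=\Delta_2 f$, and substituting this value into $g$ recovers exactly the stated inequality, completing the equivalence.

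I would treat a couple of routine edge cases for completeness: the degenerate direction $\mu=0$ is trivially private and does not affect the supremum, and I would confirm that the value $\mu=\Delta_2 f$ is genuinely realized by some neighboring pair (by the definition of global $\ell_2$ sensitivity), so that the supremum over neighbors is attained and the characterization is tight. The conceptual weight of the argument sits entirely in the monotonicity claim; once the identity $\phi(a)=e^\varepsilon\phi(b)$ is in hand everything else is bookkeeping, so I would present that identity as the pivotal lemma and organize the write-up around it.
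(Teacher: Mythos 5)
The chapter does not actually prove this theorem---it states the result and defers to the cited reference, so there is no in-paper proof to compare against. Your argument is correct and is essentially the proof given in that reference: the hockey-stick/privacy-loss characterization of $(\varepsilon,\delta)$-DP with the optimal event $S^\star=\{p_D>e^\varepsilon p_{D'}\}$, the reduction to a one-dimensional pair $\mathcal{N}(\mu,\sigma^2)$ vs.\ $\mathcal{N}(0,\sigma^2)$ by rotational invariance, the Gaussian law of the privacy loss $L\sim\mathcal{N}(\pm\eta,2\eta)$ with $\eta=\mu^2/(2\sigma^2)$, and the monotonicity of $g(\mu)$ via the identity $a^2-b^2=-2\varepsilon$ (hence $\phi(a)=e^\varepsilon\phi(b)$ and $g'(\mu)=e^\varepsilon\phi(b)/\sigma>0$) are all sound, and your attention to attainment of $\mu=\Delta_2 f$ (needed for the ``only if'' direction) and to the degenerate case $\mu=0$ closes the remaining gaps. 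This would serve as a complete, self-contained proof of the statement the chapter leaves uncited-but-unproved.
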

\noindent
The reader is referred to \cite{Balle2018ImprovingTG} for the details on the analytical Gaussian mechanism.


\paragraph{Discussion of accuracy.} The exact formal accuracy guarantees is left as an exercise to the reader. The proof is similar to that of the accuracy guarantee for the Laplace mechanism, simply quantifying tails on the Gaussian distribution. Note that, in high-dimensions, the Laplace mechanism introduces noise scaled by $\frac{d}{n\epsilon}$ to each dimension, providing an $\epsilon$-DP estimate of $f$ with an $\ell_2$ error scaling as $O(\frac{d^{3/2}}{n\epsilon})$. In contrast, the Gaussian mechanism introduces noise with a scale of $O(\sqrt{\frac{d\log(1/\delta)}{n\epsilon}})$ per dimension, resulting in an $(\epsilon, \delta)$-DP estimate of $f$ with $\ell_2$ error approximately $O(\frac{d}{n\epsilon})$. Thus the Gaussian mechanism shaves of a factor of $O(\sqrt{d})$ from the noise, improving accuracy significantly for large $d$ at a slight cost to the privacy guarantee, positing it as a potentially more effective approach for multi-variate estimations.

\section{Beyond Statistical Queries: Differentially Private Selection}
\label{ch1:sec6}

Numerical queries form an important class of computations over which privacy can be enforced. However, in many natural situations, {\em the goal may be to output an object selected according to certain criteria among other objects, rather than just a numerical value.} Consider the following example, adapted from \cite{DworkRoth2014}. Suppose that a retailer is selling an amount of items for which there are 3 potential buyers $A$, $B$, and $C$. Each buyer has a maximum price they are willing to pay for the item, known as their \emph{valuation}. The buyers wish to keep their valuations private, to avoid disclosing sensitive information about their purchasing strategies or financial standing. {\em Hence the task of the retailer is to set a sale price to maximize their total revenue without revealing the valuations of the buyers in the process.}

Assume that the valuations of buyers $A, B$ and $C$ are, respectively \$1.00, \$1.01, and \$3.01. Consider the possible pricing options:
\begin{itemize}[leftmargin=*,topsep=2pt,itemsep=0pt]
    \item \textbf{Price at \$1.00}: All three buyers are willing to purchase at this price, thus the total revenue is \$1.00 $\times$ 3 buyers $=$ \textbf{\$3.00}.
    \item \textbf{Price at \$1.01}: Buyers $B$ and $C$ are willing to purchase, thus the total revenue is \$1.01 $\times$ 2 buyers $=$ \textbf{\$2.02}.
    \item \textbf{Price at \$3.01}: Only buyer $C$ is willing to purchase, thus the total revenue is \$3.01 $\times$ 1 buyer $=$ \textbf{\$3.01}.
\end{itemize}
To maximize revenue, the retailer should set the price at \$3.01. However, since the buyers' valuations are private, the seller cannot directly know the optimal price. The seller needs to select a price in a privacy-preserving manner. One naive approach might be for the seller to add random noise to the buyers' valuations to preserve their privacy. Suppose the seller adds noise to buyer $C$'s valuation, and it becomes, say, \$3.02. Based on this noisy valuation, the seller decides to set the price at \$3.02 apiece. However, this approach leads to a problem: at a price of \$3.02, none of the buyers are willing to purchase the item, since their true valuations are all below this price. Consequently, the total revenue would be \textbf{\$0}, which is worse than any of the previous pricing options. This illustrates that simply adding noise to the valuations is not suitable for such a setting. Adding noise to the valuations can lead to suboptimal pricing decisions. Small changes in the valuations (due to noise) can result in significant differences in the optimal price, which may drastically reduce the seller's revenue or eliminate it altogether. This is particularly problematic when the output is an object selection (the optimal price) rather than a simple numerical query.

\subsection{The Exponential Mechanism}
To be able to perform selection privately while also preserving the quality of the selection made, McSherry and Talwar defined the exponential mechanism \cite{McSherryTalwar2007}. Given a set of objects ${\cal H}$, a dataset $D\in {\cal D}$, and a score function $s: {\cal D}\times{\cal H}\to\mathbb{R}$, the exponential mechanism chooses an object $h\in{\cal H}$ that maximizes the score function in a differentially private manner. 
\begin{definition}[Exponential Mechanism]
    The exponential mechanism, denoted by ${\cal M}_{\text{exp}}$, takes as input a dataset $D\in{\cal D}$, a set of objects $\cal H$, and a score function $s: {\cal D}\times{\cal H}\to\mathbb{R}$ and outputs $h\in{\cal H}$ with probability proportional to $\exp\left(\frac{\varepsilon s(D,h)}{2\Delta s}\right)$, where $\Delta s\triangleq\max_{h\in{\cal H}}\max_{D\sim D'}|s(D,h)-s(D',h)|$.
\end{definition}

\noindent
In this pricing example, the seller defines a utility function $u(D, p)$ that calculates the total revenue generated by setting a price $p$, given the buyers' valuations in the dataset $D$. The exponential mechanism then selects a price $p$ with probability \emph{proportional} -- the actual probability needs to be renormalized to sum to $1$ -- to:
\[
	\exp\left( \frac{\varepsilon \cdot u(D, p)}{2 \Delta u} \right),
\]
where $\varepsilon$ is the privacy parameter controlling the level of privacy, and $\Delta u$ is the global sensitivity of the utility function---that is, the maximum change in $u(D, p)$ when a single individual's valuation in $D$ is modified. The seller thus probabilistic-ally chooses a price that is likely to yield high revenue. 
The probability of selecting a particular price is influenced by the total revenue it generates, but is also smoothed to prevent any single buyer's data from having too much impact on the computation. This smoothing out is controlled by $\varepsilon$. When $\varepsilon \to 0$, all prices become equally likely independently of the buyers' valuations $D$ and the revenue $u(D,p)$, leading to perfect privacy. As $\varepsilon$ increases, the mechanism introduces less smoothing out and gives more importance to the revenue $u(D,p)$, providing more utility---by putting more mass on higher revenues---but less privacy. 
This mechanism thus allows the seller to achieve a balance between maximizing revenue and preserving the privacy of the buyers. The exponential mechanism provides Differential Privacy.
\begin{theorem}
The exponential mechanism, ${\cal M}_{\text{exp}}$, achieves $(\varepsilon,0)$-Differential Privacy.
\end{theorem}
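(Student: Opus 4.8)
The plan is to verify the definition of $(\varepsilon,0)$-DP directly by bounding, for every fixed object $h\in{\cal H}$ and every pair of neighboring datasets $D\sim D'$, the ratio of the pointwise output probabilities, and then lifting this pointwise bound to arbitrary events $S\subseteq{\cal H}$ by summation. Writing the output probability explicitly as the unnormalized weight $\exp\!\left(\frac{\varepsilon s(D,h)}{2\Delta s}\right)$ divided by the normalizing constant $\sum_{h'\in{\cal H}}\exp\!\left(\frac{\varepsilon s(D,h')}{2\Delta s}\right)$, the ratio $\Pr[{\cal M}_{\text{exp}}(D)=h]/\Pr[{\cal M}_{\text{exp}}(D')=h]$ factors neatly into a ratio of numerators, depending only on $s(\cdot,h)$, and a ratio of normalizing constants.

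First I would bound the numerator factor: by the definition of $\Delta s$ we have $s(D,h)-s(D',h)\leq\Delta s$, so this factor is at most $\exp(\varepsilon/2)$. For the normalizer factor, I would apply the same sensitivity bound termwise --- for each $h'$, the inequality $s(D',h')\leq s(D,h')+\Delta s$ gives $\exp\!\left(\frac{\varepsilon s(D',h')}{2\Delta s}\right)\leq\exp(\varepsilon/2)\exp\!\left(\frac{\varepsilon s(D,h')}{2\Delta s}\right)$ --- and summing over $h'$ shows that the normalizing constant for $D'$ is at most $\exp(\varepsilon/2)$ times that for $D$. Multiplying the two factors yields $\exp(\varepsilon/2)\cdot\exp(\varepsilon/2)=\exp(\varepsilon)$. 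Finally, for any event $S\subseteq{\cal H}$, summing the pointwise bound over $h\in S$ gives $\Pr[{\cal M}_{\text{exp}}(D)\in S]\leq\exp(\varepsilon)\Pr[{\cal M}_{\text{exp}}(D')\in S]$, which is exactly the $(\varepsilon,0)$-DP guarantee.

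The main obstacle --- and the conceptual reason for the factor of $2$ appearing in the definition --- is the normalizing constant. A first attempt that tracks only the change in the chosen object's weight would yield a bound of $\exp(\varepsilon/2)$ and would misleadingly suggest the factor of $2$ is unnecessary; the missing $\exp(\varepsilon/2)$ comes precisely from the shift in the denominator, since replacing $D$ by $D'$ simultaneously reweights every competing object. The privacy loss is therefore split evenly between the numerator and the denominator, and only once the normalizer is handled carefully do the two halves combine to the claimed $\exp(\varepsilon)$. The extension from a discrete object space ${\cal H}$ to a continuous one is routine, replacing sums by integrals without altering any of the inequalities.
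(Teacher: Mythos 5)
Your proposal is correct and follows essentially the same route as the paper's proof: factor the pointwise probability ratio into the numerator term (bounded by $\exp(\varepsilon/2)$ via the definition of $\Delta s$) and the ratio of normalizing constants (bounded termwise by another $\exp(\varepsilon/2)$), then multiply. The only additions beyond the paper's argument are the explicit summation over $h\in S$ to pass from the pointwise bound to arbitrary events and the remark on continuous ${\cal H}$, both of which are routine and consistent with the paper's finite-${\cal H}$ treatment.
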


\begin{proof}
The proof assumes that ${\cal H}$ is a finite set. For any two neighbouring datasets $D\sim D'$ and some outcome $h\in{\cal H}$,
\begin{align*}
\frac{\Pr[{\cal M}_{\text{exp}}(D)=h]}{\Pr[{\cal M}_{\text{exp}}(D')=h]}&=\frac{\left(\frac{\exp\left(\nicefrac{\varepsilon s(D,h)}{2\Delta s}\right)}{\sum_{h'\in\mathcal{H}}\exp\left(\nicefrac{\varepsilon s(D,h')}{2\Delta s}\right)}\right)}{\left(\frac{\exp\left(\nicefrac{\varepsilon s(D',h)}{2\Delta s}\right)}{\sum_{h'\in\mathcal{H}}\exp\left(\nicefrac{\varepsilon s(D',h')}{2\Delta s}\right)}\right)}\\
&=\exp\left(\frac{\varepsilon (s(D,h)-s(D',h))}{2\Delta s}\right)\frac{\sum_{h'\in\mathcal{H}}\exp\left(\nicefrac{\varepsilon s(D',h')}{2\Delta s}\right)}{\sum_{h'\in\mathcal{H}}\exp\left(\nicefrac{\varepsilon s(D,h')}{2\Delta s}\right)}\\
&\leq\exp\left(\frac\varepsilon 2\right)\exp\left(\frac\varepsilon 2\right)\frac{\sum_{h'\in\mathcal{H}}\exp\left(\nicefrac{\varepsilon s(D,h')}{2\Delta s}\right)}{\sum_{h'\in\mathcal{H}}\exp\left(\nicefrac{\varepsilon s(D,h')}{2\Delta s}\right)}\\
&=\exp(\varepsilon).
\end{align*}
The inequality follows due to the definition of $\Delta s$.
\end{proof}

\paragraph{Accuracy guarantee.} For the exponential mechanism, accuracy is not measured in terms of how close the mechanism is to the optimal hypothesis $h$. Rather, the objective is to guarantee that, with high probability, the output by the mechanism has a high score, as close as possible to optimality.

\begin{theorem}\label{thm: exp_utility}
Let us fix a database $D$, and let $\mathcal{H}_{\text{OPT}} = \{h^* \in \mathcal{H}~\text{s.t.}~s(D,h) = \max_{h} s(D,h)\}$ be the set of elements in $\mathcal{H}$ that achieve the maximum possible utility score. Then, the exponential mechanism guarantees
\[
Pr \left[s(D, \mathcal{M}_{\text{exp}}(D)) \geq \text{OPT} - \frac{2 \Delta s}{\varepsilon} \left(\ln \left(|\mathcal{H}|/\beta\right) \right)\right] \geq 1-\beta.
\]
where $\text{OPT}= \max_h s(D,h)$.
\end{theorem}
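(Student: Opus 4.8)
The plan is to prove the equivalent tail statement by complementation: I would show that the probability the mechanism returns an object whose score falls strictly below $\text{OPT} - t$ is at most $\beta$ for the threshold $t = \frac{2\Delta s}{\varepsilon}\ln(|\mathcal{H}|/\beta)$, and then pass to the complement. The starting point is to write out the selection probability explicitly, $\Pr[\mathcal{M}_{\text{exp}}(D) = h] = \exp\!\left(\nicefrac{\varepsilon s(D,h)}{2\Delta s}\right) \big/ \sum_{h'\in\mathcal{H}} \exp\!\left(\nicefrac{\varepsilon s(D,h')}{2\Delta s}\right)$, so that the whole argument reduces to controlling the ratio of the total exponential weight placed on low-scoring (``bad'') objects against the normalizing partition function in the denominator. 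I would assume, as in the mechanism's DP proof, that $\mathcal{H}$ is finite so these sums are well defined.

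First I would lower-bound the denominator. Since $\mathcal{H}_{\text{OPT}}$ is nonempty by definition, at least one term of the normalizing sum equals $\exp\!\left(\nicefrac{\varepsilon\,\text{OPT}}{2\Delta s}\right)$; discarding all remaining (nonnegative) terms shows the partition function is at least $\exp\!\left(\nicefrac{\varepsilon\,\text{OPT}}{2\Delta s}\right)$. Next I would upper-bound the numerator. Letting $\mathcal{H}_{\text{bad}} = \{h \in \mathcal{H} : s(D,h) < \text{OPT} - t\}$, each object in $\mathcal{H}_{\text{bad}}$ contributes weight at most $\exp\!\left(\nicefrac{\varepsilon(\text{OPT}-t)}{2\Delta s}\right)$, and there are at most $|\mathcal{H}|$ of them, so the total weight on $\mathcal{H}_{\text{bad}}$ is at most $|\mathcal{H}|\exp\!\left(\nicefrac{\varepsilon(\text{OPT}-t)}{2\Delta s}\right)$.

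Dividing the two bounds, the probability of the bad event is at most $|\mathcal{H}|\exp\!\left(-\nicefrac{\varepsilon t}{2\Delta s}\right)$, where the $\text{OPT}$ terms cancel. Setting this quantity equal to $\beta$ and solving for $t$ recovers $t = \frac{2\Delta s}{\varepsilon}\ln(|\mathcal{H}|/\beta)$ exactly, so $\Pr[\,s(D,\mathcal{M}_{\text{exp}}(D)) < \text{OPT} - t\,] \leq \beta$, and taking complements yields the stated high-probability utility bound.

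The argument is essentially a single union-style bound once the selection probability is written out, so I do not expect a serious technical obstacle; the only real care is in orienting each inequality correctly — lower-bounding the partition function by retaining a \emph{single} optimal term while upper-bounding the bad mass by the crude count $|\mathcal{H}|$. The main conceptual point to appreciate is that this very loose lower bound on the denominator is nonetheless enough: because the exponential weighting concentrates mass on near-optimal objects, throwing away all but one optimal term costs only the additive $\frac{2\Delta s}{\varepsilon}\ln|\mathcal{H}|$ slack that appears in the final guarantee. A minor bookkeeping point is keeping the strict versus non-strict inequalities consistent when moving between the bad event $\{s < \text{OPT}-t\}$ and its complement, which matches the non-strict ``$\geq$'' in the statement.
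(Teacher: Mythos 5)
Your proposal is correct and follows essentially the same argument as the paper's proof: bound the probability of the low-score event by the ratio of the bad objects' total exponential weight (at most $|\mathcal{H}|\exp\left(\varepsilon(\mathrm{OPT}-t)/2\Delta s\right)$) to the partition function (at least $\exp\left(\varepsilon\,\mathrm{OPT}/2\Delta s\right)$, the paper retaining the sum over $\mathcal{H}_{\text{OPT}}$ before dropping $|\mathcal{H}_{\text{OPT}}|\geq 1$), then solve for the threshold. The only cosmetic differences are that the paper works with the event $\{s \leq c\}$ rather than a strict inequality and writes $\delta$ where it means $\beta$ in the final substitution.
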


\begin{proof}
Take any $c \in \mathbb{R}$. It follows that
\begin{align*}
    \Pr \left[s(D,\mathcal{M}_{\text{exp}}(D) \leq c \right] 
    &= \frac{\sum_{h:~s(D,h) \leq c} \exp \left(\varepsilon s(D,h)/2 \Delta s\right)}{\sum_{r\in \mathcal{H}} \exp(\varepsilon s(D,h)/2\Delta s)}
    \\&\leq \frac{\sum_{r:~s(D,h) \leq c} \exp \left(\varepsilon c/2 \Delta s\right)}{\sum_{r\in \mathcal{H}_{\text{OPT}}} \exp(\varepsilon \text{OPT}/2\Delta s)}
    \\&\leq \frac{|\mathcal{H}| \exp(\varepsilon c/2\Delta s)}{|\mathcal{H}_{\text{OPT}}| \exp(\varepsilon \text{OPT}/2\Delta s)}
    \\&= \frac{|\mathcal{H}|}{|\mathcal{H}_{\text{OPT}}|} \exp\left(\frac{\varepsilon (c - \text{OPT})}{2 \Delta s}\right)
    \\&\leq |\mathcal{H}| \exp\left(\frac{\varepsilon (c - \text{OPT})}{2 \Delta s}\right).
\end{align*}

The result follows by plugging in 
\[
c \triangleq \text{OPT} - \frac{2 \Delta s}{\varepsilon} \ln \left(|\mathcal{H}|/\delta\right).
\]
\end{proof}

\noindent
Practically, this means that, although the mechanism introduces randomness to protect individual privacy (e.g., the buyers' valuations in our example), it still ensures that the selected output (the price) will yield a utility (the revenue) that is close to the best possible. E.g., in our example, the maximum possible revenue was \$3.01 at price \$3.01). 
Moreover, the utility loss due to privacy is limited and can be controlled by adjusting the privacy parameters.

\section{Randomized Response, Revisited}
\label{ch1:ssub:rr}

Before concluding this chapter, it is useful to revisit the concept of randomized response. Consider Figure \ref{fig:ch1:dpmotiv}: its left side presents a pixelated version of the Mona Lisa, where each pixel is represented by either an `M' or a `.' character. By implementing a random process that flips each pixel with a probability of 0.25, the figure on the right emerges as locally perturbed yet retains the overall image, enabling recognition of the iconic Mona Lisa painting. This metaphor demonstrates that, although plausible deniability is afforded for the original value of each pixel, the outcomes of data analysis can still be preserved with considerable accuracy.

\begin{figure}[!t]
\centering
\includegraphics[width=0.8\linewidth]{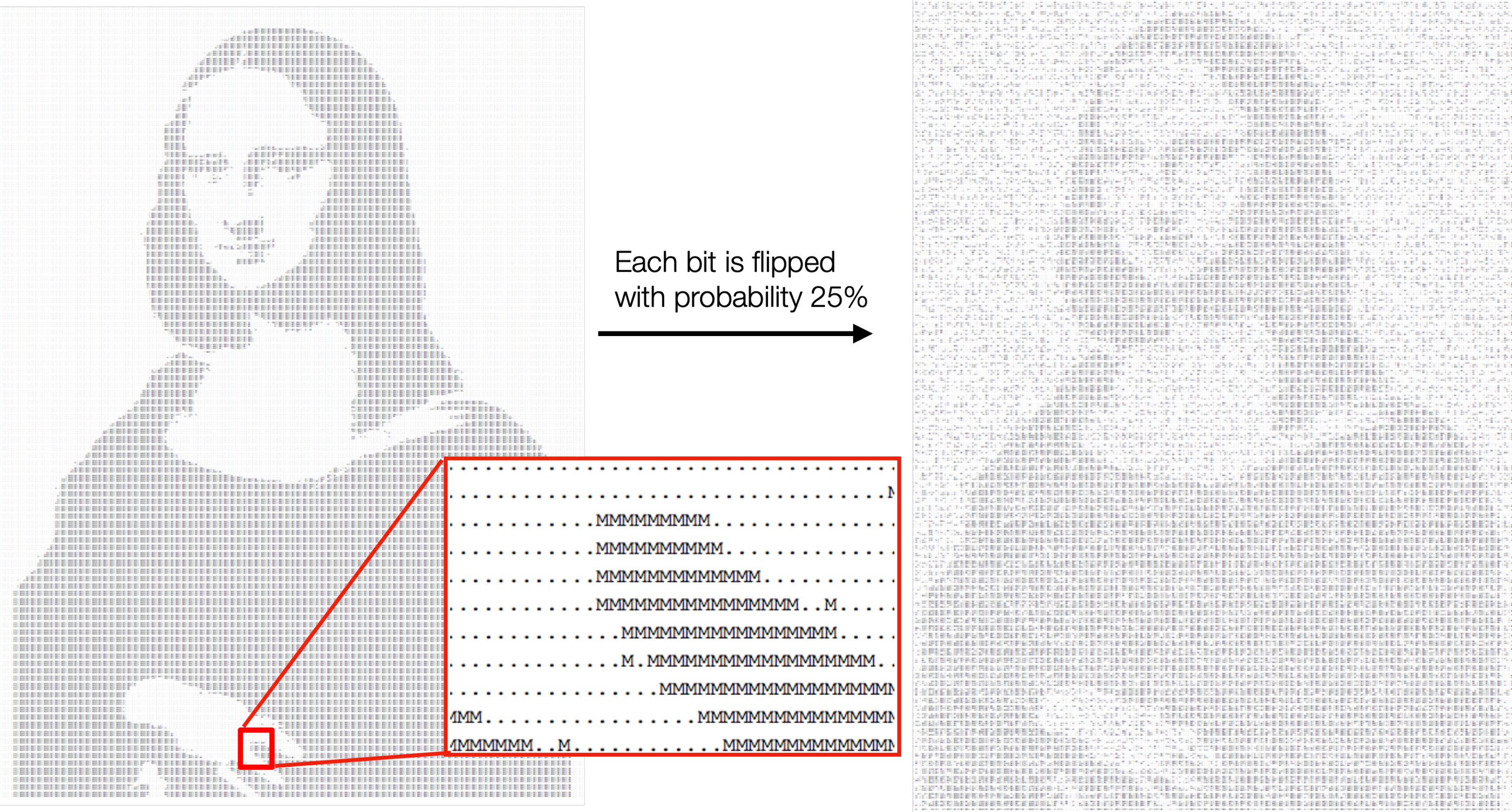}
\caption{A metaphor for private data analysis: Perturbing each bit of the image on the left by flipping it with a random probability of $25\%$ prevents inferring with high probability whether each single bit was originally an "M" or a ".", while still allowing to observe conclusions from the big picture. Figure adapted from slides presentation of Ulfar Erlingsson \cite{RapporTalk2017}.}
\label{fig:ch1:dpmotiv}
\end{figure}

\paragraph{Revisiting Randomized Response.}
Figure \ref{fig:ch1:dpmotiv} happens to be an instance of using randomized response to obscure individual responses while providing accurate summary statistics. Indeed, there is an equivalent formulation of randomized response that satisfies $\varepsilon$-DP in a stronger setting called \emph{local Differential Privacy}, where instead of having a trusted curator that perturbs raw data to provide Differential Privacy, each data contributor perturbs its own data prior to its release. 
Given $\varepsilon>0$, for every private bit $X$, the mechanism is defined as follows:
    $${\cal M}(X)=\begin{cases}X, \text{ with probability}=\frac{\exp{(\varepsilon)}}{1+\exp{(\varepsilon)}};\\
    1-X, \text{ with probability}=\frac{1}{1+\exp(\varepsilon)}.\end{cases}$$

\paragraph{Privacy guarantees.} Randomized response has the following Differential Privacy guarantees.
\begin{theorem}
Randomized Response is $\left(\varepsilon,0\right)$-differentially private.
\end{theorem}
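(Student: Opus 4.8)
The plan is to verify the defining inequality~\eqref{eq:dp_def} with $\delta = 0$ directly, exploiting the fact that in the local model the relevant ``adjacent datasets'' are simply the two possible values of the single private bit, namely $X = 0$ and $X = 1$, and that the output space $\{0,1\}$ is finite. Since the inequality is trivially satisfied for the events $S = \emptyset$ and $S = \{0,1\}$, it suffices to check it for the two singleton events $S = \{0\}$ and $S = \{1\}$.

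First I would tabulate the four output probabilities. By the definition of ${\cal M}$, for any input bit $x$ the mechanism returns the truthful value with probability $\frac{\exp(\varepsilon)}{1+\exp(\varepsilon)}$ and the flipped value with probability $\frac{1}{1+\exp(\varepsilon)}$. Hence for a fixed output $y \in \{0,1\}$,
\[
\Pr[{\cal M}(x) = y] = \begin{cases} \dfrac{\exp(\varepsilon)}{1+\exp(\varepsilon)}, & y = x,\\ \dfrac{1}{1+\exp(\varepsilon)}, & y \neq x.\end{cases}
\]

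Next I would bound the privacy loss ratio. Fixing any output $y$ and any pair of inputs $x, x'$, the ratio $\Pr[{\cal M}(x)=y]/\Pr[{\cal M}(x')=y]$ is largest when the numerator takes its larger value $\frac{\exp(\varepsilon)}{1+\exp(\varepsilon)}$ (i.e.\ $y = x$) while the denominator takes its smaller value $\frac{1}{1+\exp(\varepsilon)}$ (i.e.\ $y \neq x'$). In that worst case the common factor $1+\exp(\varepsilon)$ cancels and the ratio equals exactly $\exp(\varepsilon)$; in every other case the ratio is either $1$ or $\exp(-\varepsilon) \le \exp(\varepsilon)$. Thus $\Pr[{\cal M}(x) = y] \le \exp(\varepsilon)\,\Pr[{\cal M}(x') = y]$ for all $x, x', y$, which is precisely $(\varepsilon,0)$-differential privacy.

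Because the argument reduces to comparing two explicit scalar probabilities, there is no genuinely hard computational step; the only point requiring care is the interpretation of adjacency in the local model---that the comparison is between the two possible private inputs rather than between neighbouring multi-record datasets---together with the observation that finiteness of the range lets us restrict attention to singleton events.
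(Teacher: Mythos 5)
Your proposal is correct and follows essentially the same route as the paper: both arguments reduce to computing the ratio $\Pr[{\cal M}(x)=y]/\Pr[{\cal M}(x')=y]$ for the two possible outputs and observing that it equals either $\exp(\varepsilon)$ or $\exp(-\varepsilon)$, both bounded by $\exp(\varepsilon)$. Your added remarks---that finiteness of the range lets one restrict to singleton events, and that adjacency here means comparing the two possible values of the single private bit---are sensible elaborations of steps the paper leaves implicit, but they do not change the substance of the argument.
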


\begin{proof}
Let $p = \frac{\exp(\varepsilon)}{1 + \exp(\varepsilon)}$ for simplicity of exposition. The proof obligation is to upper bound the probability of ratios of probabilities for the two possible outcomes ${\cal M}(X) = X$ and ${\cal M}(X) = 1-X$ for any $X \in \{0,1\}$ and the neighbouring $X' = 1-X$, i.e., 
\[
\frac{Pr [{\cal M}(X) = X]}{Pr [{\cal M}(X') = X]} = \frac{Pr [{\cal M}(X) = X]}{Pr [{\cal M}(1-X) = X]},
\]
and 
\[
\frac{Pr [{\cal M}(X) = 1-X]}{Pr [{\cal M}(X') = 1-X]} = \frac{Pr [{\cal M}(X) = 1-X]}{Pr [{\cal M}(1-X) = 1-X]}.
\]
Note that the first quantity is equal to $\frac{p}{1-p} = \exp(\varepsilon)$, while the second quantity is equal to $\frac{1-p}{p} = \exp(-\varepsilon)$. This is enough to conclude the proof.
\end{proof}

\paragraph{Accuracy of Randomized Response.} To provide the accuracy guarantee of Randomized Response, consider a collection of $n$ data points $X_1, \ldots, X_n$. The goal is to compute the average of these data points, given by $\mu \triangleq \frac{1}{N}\sum_{i=1}^n X_i$. Consider the following simple linear estimator that corrects for the bias introduced by flipping $X$ to the wrong answer, $1-X$, with probability $p \triangleq \frac{\exp(\varepsilon)}{1 + \exp(\varepsilon)}$:
\[
\hat{X} = \frac{1}{(2p-1)N} \left(\sum_{i=1}^n {\cal M}(X_i) + p - 1\right).
\]

\begin{lemma}
$\hat{X}$ is an unbiased estimator of $\mu = \frac{1}{n} \sum_{i = 1}^n X_i$. Further, with probability at least $1 - \beta$, 
\[
\left\vert \hat{X} - \mu \right\vert \leq \frac{\sqrt{1/\beta}}{2 (2p - 1) \sqrt{n}}.
\]
\end{lemma}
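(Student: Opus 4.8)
The plan is to handle the two assertions separately: first unbiasedness by a direct expectation computation, then the high-probability bound via Chebyshev's inequality. For unbiasedness, I would start at the level of a single reported bit. Writing $p = \frac{e^\varepsilon}{1+e^\varepsilon}$, the mechanism returns the true value $X_i$ with probability $p$ and the flipped value $1-X_i$ with probability $1-p$, so a short case analysis on $X_i \in \{0,1\}$ gives $\mathbb{E}[{\cal M}(X_i)] = p X_i + (1-p)(1-X_i) = (2p-1)X_i + (1-p)$. Summing over $i$ and using linearity, $\mathbb{E}[\sum_i {\cal M}(X_i)] = (2p-1)\sum_i X_i + n(1-p)$. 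Plugging this into the definition of $\hat{X}$ — where the additive correction $p-1$ must be applied once per summand, contributing $n(p-1) = -n(1-p)$ in total — the $n(1-p)$ terms cancel and the $(2p-1)$ factor divides out, leaving $\mathbb{E}[\hat{X}] = \frac{1}{N}\sum_i X_i = \mu$ (taking $N=n$). This is the step where I would be most careful, since getting the bias-correction constant exactly right is precisely what renders the estimator unbiased.

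For the concentration claim, the form of the bound — proportional to $\sqrt{1/\beta}$ rather than $\sqrt{\log(1/\beta)}$ — signals that the intended tool is Chebyshev's inequality rather than a Chernoff/Hoeffding argument. I would therefore compute the variance of $\hat{X}$. The key observation is that each ${\cal M}(X_i)$ is a Bernoulli variable equal to its correct value with probability $p$; in either case $X_i = 0$ or $X_i = 1$ its variance is $p(1-p)$. Since respondents flip their coins independently, the ${\cal M}(X_i)$ are independent, so $\mathrm{Var}(\sum_i {\cal M}(X_i)) = n\,p(1-p)$. The additive constant and the deterministic factor $\frac{1}{(2p-1)N}$ then give $\mathrm{Var}(\hat{X}) = \frac{p(1-p)}{(2p-1)^2 n}$ (again with $N=n$).

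Finally, I would apply Chebyshev: $\Pr[\,|\hat{X} - \mu| \geq \lambda\,] \leq \mathrm{Var}(\hat{X})/\lambda^2$. Setting the right-hand side equal to $\beta$ and solving yields $\lambda = \frac{\sqrt{p(1-p)}}{(2p-1)\sqrt{n\beta}}$. To match the stated bound I would then invoke the elementary inequality $p(1-p) \leq \tfrac14$, valid for all $p \in [0,1]$, so that $\sqrt{p(1-p)} \leq \tfrac12$; this replaces the numerator by $\tfrac12$ and produces exactly $\frac{\sqrt{1/\beta}}{2(2p-1)\sqrt{n}}$. Note that $2p-1 > 0$ holds because $p > \tfrac12$ whenever $\varepsilon > 0$, so the expression is well-defined. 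I do not anticipate any serious obstacle; the only two things to watch are the bookkeeping of the bias-correction constant in the unbiasedness step and the deliberate choice of Chebyshev together with the $p(1-p)\le \tfrac14$ relaxation, which is what reproduces the precise constant in the claimed deviation bound.
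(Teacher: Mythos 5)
Your proposal is correct and follows essentially the same route as the paper: unbiasedness by computing $\mathbb{E}[{\cal M}(X_i)] = (2p-1)X_i + (1-p)$ and cancelling the bias term, then Chebyshev's inequality with the Bernoulli variance bound $\mathrm{Var}[{\cal M}(X_i)] \leq \tfrac14$. Your version is if anything slightly more careful, since you make explicit the per-summand bias correction (the paper's displayed formula for $\hat{X}$ writes the $p-1$ correction only once, which needs your reading to make the estimator unbiased) and you compute the exact variance $p(1-p)$ before relaxing it to $\tfrac14$.
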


\noindent Before providing the proof of this accuracy bound, consider what Differential Privacy promises. Remember that $p \triangleq \frac{\exp(\varepsilon)}{1 + \exp(\varepsilon)}$. Plugging this in the bound above, 
\[
\left\vert \hat{X} - \mu \right\vert = O \left(\frac{(1 + e^\varepsilon)}{2 \left( e^\varepsilon - 1 \right) \sqrt{n}}\right).
\]
As $\varepsilon \to 0$, the $1 + \exp(\varepsilon)$ term goes to $1$; the $1 - \exp(\varepsilon)$ term can be approximated by $\varepsilon$ given a first-order Taylor expansion. Hence, it follows that, as $\varepsilon$ is small,
\[
\left\vert \hat{X} - \mu \right\vert = O\left(\frac{1}{\varepsilon \sqrt{n}}\right).
\]
In particular, given a small $\varepsilon$, to obtain an accuracy of $\alpha$, requires that $n \sim \frac{1}{\varepsilon^2 \alpha^2}$ samples. 

\begin{proof}
Note that 
\begin{align*}
\mathbb{E} \left[ {\cal M}(X) \right] 
&= Pr [{\cal M}(X) = X] \cdot X + Pr [{\cal M}(X) = 1 - X] \cdot (1-X) 
\\&= p X + (1-p) (1-X)
\\&= (2p - 1) X + (1 - p).
\end{align*}
Therefore, 
\[
\mathbb{E} \left[ {\cal M}(X) \right] = (2p - 1) \mu + (1 - p),
\]
immediately implying unbiasedness of $\hat{X}$. Now note that the variance of estimator $\hat{X}$ is given by
\[
\text{Var} \left[\hat{X} \right] 
= \frac{1}{(2p -1)^2 N^2} \sum_{i=1}^N \text{Var}\left[{\cal M}(X_i)\right] 
\leq \sum_{i=1}^N \frac{1}{4 (2p-1)^2 N^2}
= \frac{1}{4 (2p-1)^2 N},
\]
where the first equality follows from the fact that $\text{Var} [cX] = c^2 \text{Var}[X]$ and $\text{Var}[X + c] = \text{Var}[X]$ for a constant $c$, and the inequality follows from the fact that ${\cal M}(X)$ is a Bernoulli random variable and has variance at most $1/4$. Using Chebyshev's inequality with $k = \frac{1}{\sqrt{\beta}}$, it follows that 
\[
\Pr \left[\left\vert \hat{X} - \mu \right\vert \geq \frac{\sqrt{1/\beta}}{2 (1-2p) \sqrt{n}} \right] \leq \beta.
\]
\end{proof}

\noindent
The above bound is an example of privacy-accuracy trade-off. To obtain an accuracy level of $\alpha$ (i.e., the estimator does not mis-estimate $\mu$ by more than $\alpha$) with high probability $1 - \beta$, one needs to pick the value of $p$ such that 
\[
\frac{\sqrt{1/\beta}}{2 (1-2p) \sqrt{n}} \leq \alpha.
\]
This immediately gives the desired value of $\varepsilon$, given us a trade-off between the accuracy level $\alpha$ and the privacy level $\varepsilon$. Here, decreasing $\varepsilon$ towards $0$ (or equivalently decreasing $p$ towards $1/2$) yields a worse accuracy guarantee, as the denominator decreases and eventually goes to $0$. This goes in the expected direction: the more privacy is required, the more the accuracy suffers. 
\section{Concluding Remarks}
\label{ch1:sec7}
This chapter discussed foundational concepts and mechanisms that are the bedrock of Differential Privacy.  Since its conceptual introduction, Differential Privacy has seen considerable evolution, both in theoretical development and practical applications. Researchers have refined the mathematical guarantees, offering tighter bounds on privacy leakage and more effective mechanisms for trading utility with privacy. Practically, Differential Privacy has been applied across diverse sectors, from healthcare to social science, to engineering systems, as reviewed in Part III. These applications demonstrate the flexibility and robustness of Differential Privacy in safeguarding personal information while maintaining data utility. The implications of adopting Differential Privacy extends beyond the technical realm, influencing regulatory policies around data privacy \cite{AIdevelopment2023}, as also discussed in Part V of this book. As organizations increasingly rely on data-driven decision-making, the implementation of DP can help build trust with stakeholders by demonstrating a commitment to privacy-preserving practices. This trust is crucial for compliance with international data protection regulations and for fostering a more privacy-conscious data ecosystem. Furthermore, the principles of Differential Privacy can guide ethical considerations in data usage, promoting a balance between innovation and individual rights to privacy.

\paragraph{Acknowledgements.} 
This work was partially supported by NSF grants SaTC-2345483, CAREER RI-2401285, CAREER HCC-2336236, and by a Google Scholar Research Award. Its view and conclusions are those of the authors only.

\bibliographystyle{unsrtnat}
\bibliography{ref-ch1} 
\end{document}